 \newtheorem{proposition}{Proposition}
 \newtheorem{remark}{Remark}
 \newtheorem{problem}{Problem}
\let\oldmarginpar\marginpar
\renewcommand\marginpar[1]{\-\oldmarginpar[\raggedleft\footnotesize #1]%
{\raggedright\footnotesize #1}}
\title{\LARGE \bf
Stochastic receding horizon control of nonlinear stochastic systems with probabilistic state constraints
}
\author{Shridhar K. Shah, Herbert G. Tanner and Chetan D. Pahlajani
\thanks{This work is supported by ARL MAST CTA \# W911NF-08-2-0004}
\thanks{Shridhar K. Shah and Herbert G. Tanner are with Department of Mechanical Engineering,
        University of Delaware, Newark, DE, USA
        {\tt\small \{shridhar, btanner\}@udel.edu}}%
\thanks{Chetan D. Pahlajani is with Department of Mathematical Sciences,
        University of Delaware, Newark, DE, USA
        {\tt\small chetan@math.udel.edu}}%
\thanks{A portion of this work has been previously presented at International Conference on Robotics and Automation (ICRA) 2012 \cite{shah-ICRA2012}, which dealt with systems without control multiplicative term and unbounded inputs and a linear example. We extend the theory to systems with control multiplicative term. We also comment on bounded input case and explain a recovery strategy. A nonlinear example, numerical solution methods, extra results and simulations are included.}
}
\begin{document}

\acrodef{sde}[\textsc{sde}]{stochastic differential equation}
\acrodef{pde}[\textsc{pde}]{partial differential equation}
\acrodef{hjb}[\textsc{hjb}]{Hamilton-Jacobi-Bellman}
\acrodef{mpc}[\textsc{mpc}]{model predictive control}
\acrodef{rhc}[\textsc{rhc}]{receding horizon control}
\acrodef{gshs}[\textsc{gshs}]{gereral stochastic hybrid system}

\maketitle
\thispagestyle{empty}
\pagestyle{empty}
\begin{abstract}

The paper describes a receding horizon control design framework for continuous-time stochastic nonlinear systems subject to probabilistic state constraints.  The intention is to derive solutions that are implementable in real-time on currently available mobile processors.  The approach consists of decomposing the problem into designing receding horizon reference paths based on the drift component of the system dynamics, and then implementing a stochastic optimal controller to allow the system to stay close and follow the reference path.  In some cases, the stochastic optimal controller can be obtained in closed form; in more general cases, pre-computed numerical solutions can be implemented in real-time without the need for on-line computation. The convergence of the closed loop system is established assuming no constraints on control inputs, and simulation results are provided to corroborate the theoretical predictions.

Keywords - stochastic model predictive control, nonlinear systems, exit time, stochastic optimal control, path integral

\end{abstract}

\section{Introduction}

The behavior of robotic systems can be uncertain due to a variety of reasons, including 
noise in sensor measurements and environmental effects.  
Such effects are often represented by stochastic models (for example, ocean waves \cite{ochi2005ocean}, wind 
guests \cite{barr1974wind} and uneven terrain\cite{Kewlani2010}). 
For nonlinear stochastic systems, existing methods for constrained optimal control are too computationally
demanding for real-time implementation.  Specifically, no real-time solution exists for continuous-time
nonlinear  stochastic systems with probabilistic state constraints. 
A receding horizon formulation partially lifts some of the computational burden associated with 
the nonlinear stochastic optimal control problem, but current state of the art does not allow real-time
implementation on processors at the low-end of the frequency scale.
This paper proposes a solution through a stochastic receding horizon formulation that is real-time 
implementable for nonlinear systems of modest dimension, and comes with probabilistic guarantees 
of convergence and state constraint satisfaction.

Within a predictive control framework, uncertainty can be accounted for by
either approximating sets that bound the system's trajectories 
\cite{Ram-02, Ram-06, Dar-09, Carson-08, Mar-02, Survey, MPC-Survey} 
or by stochastic models, with the latter having some specific advantages.  In particular,
while methods based on set-bounded models may result in 
over-conservative designs since they plan for the worst case,
the use of probabilistic constraints in the methods which are based on stochastic models, on the
other hand, allows for less conservatism.  In addition, stochastic model-based methods
provide some flexibility by
allowing one to adjust the probability that problem constraints are violated.
These two qualities enable stochastic model-based methods to offer solutions where
set-bounded methods may fail.

The structure of the dynamics, whenever it can be exploited, can greatly facilitate the solution of a 
\ac{mpc} problem.   When the stochastic dynamics is \emph{linear}, one may choose to
apply a Kalman filter or its variants and solve an iterative LQG problem \cite{Todorov2005}.
Alternatively, for linear stochastic systems, the optimal control 
problem under probabilistic constraints is tackled within a \emph{chance-constrained} model 
predictive control framework 
\cite{Chatterjee2011, Chatterjee2011-2, Hokayem2012, CCMPC99, CCMPC00, Lygeros09, Blackmore2010, Blackmore2011}. 
Chance-constraint formulations are available for linear discrete time systems with Gaussian noise 
\cite{Lygeros09, Cannon2007, Cannon2009, Cannon2011, Primbs2012,Primbs2010, Primbs2009, Schwarm1999, Hessem2003, hessem2004stochastic, Blackmore2011}.   

While methods exist to enable \ac{mpc} in linear stochastic systems \cite{Lygeros09, Cannon2007, Cannon2009, Cannon2011, Primbs2012,Primbs2010, Primbs2009, Schwarm1999, Hessem2003, hessem2004stochastic, Blackmore2011}, for most nonlinear systems, the stochastic receding horizon optimal control problem can not be solved in real-time. 
For example, a particle filter implementation of chance-constrained model predictive control is available for linear systems with probabilistic noise \cite{Blackmore06aprobabilistic, Blackmore2010}, and it is in principle applicable to nonlinear systems too.  However, the approximate solutions obtained using this method depend on the number of particles, and convergence is achieved after a sufficiently large number of particles is used.
Alternative (discrete-time) methods combine a hybrid density filter with dynamic programming \cite{Weissel},
the latter being the natural discrete formulation of the optimal control problem.  
In the hybrid systems
literature we find reach-avoid formulations of this problem \cite{HSCC2011, Automatica2010}, in 
which the indicator function of hitting goal or obstacle sets appears in the cost of the optimal control problem
(similarly to what is done in this paper).   Computational complexity currently limits the 
application of these methods to systems with up to three states \cite{Automatica2010}, while 
requirements for real-time implementation are not imposed.  
Invariably, computational complexity and accuracy issues surface in all discrete-time and space methods, 
either primarily due to the use of filters, or simply due to the resolution required in the
time or state-space domains. 


Time and space-discretization may be avoided if the problem is formulated in continuous space
and time.
Continuous-time solutions to stochastic optimal control problems are available 
for systems affine in control and with state independent and time invariant control transition matrix,
and it is based on path integrals \cite{Kappen2005-2}.   A path integral is essentially the solution
to a \ac{hjb} equation, obtained after the application of a particular transformation \cite{Fleming}.
In certain cases, the path integral is computable numerically using Laplace approximations or Monte Carlo 
sampling. Different applications of path-integral stochastic optimal control have been explored, such as 
reinforcement learning \cite{Theodorou-PI2}, variable stiffness control (equivalent to automatic tuning of PD 
gains) \cite{Buchli2011} and risk sensitive control \cite{Kappen2011}.  The main issue with path integrals 
is that for most nonlinear systems the solution is computationally demanding and can not be obtained 
in real-time on existing processors. This limits the application of path integral to real-time receding horizon control on miniature robots. 

The main contribution of this paper is to synthesize a real-time design for stochastic (receding horizon) control,
following an \emph{exit time} \cite{Day} formulation of the stochastic optimal control problem, instead of one 
based on path integrals.  The proposed formulation yields a time invariant control vector field, which is 
optimal in terms of actuation utilization.  What enables real-time implementation is the fact that
the field can be computed off-line and used on-line in a recursive manner.  
The formulation is based on a combination of deterministic planning with stochastic optimal control, where 
successive locally optimal stochastic controls are used to steer a system along a deterministic receding horizon 
reference trajectory, which is conceptually similar to Differential Dynamic Programming 
\cite{jacobson1970differential} and iterative LQG \cite{Todorov2005}.
While such a two-level planning and control strategies has been used successfully in 
a \emph{deterministic} setting \cite{burridge-ijrr-1999,Lamiraux,Pathak} there is no stochastic analog yet except our own work  \cite{shah-ICRA2012}.
Due to the explicit consideration of stochasticity, the proposed method offers almost sure 
(with probability one) guarantees of collision avoidance and convergence to a desired region, which
are elusive in a deterministic setting.

The work presented in this paper is organized in the following way. Section \ref{section:problemstatement} 
states the problem formally followed by an intuitive explanation of our approach in section 
\ref{section:intuitiveexample}. Section \ref{section:dayscontroller} explains a stochastic optimal control 
design, which is at the heart of our framework. Section \ref{section:framework} presents the design of the 
stochastic receding horizon framework and discusses the existence of solutions for our closed loop system. 
The convergence properties of the resulting stochastic hybrid system are established in 
Section \ref{section:stability}, and the issue of input saturation is brought up.
Section \ref{section:examples} offers
examples of linear and nonlinear systems, presents
simulation results for the cases of unbounded and bounded inputs, and discusses computation
methods for complex nonlinear stochastic systems. We conclude in Section \ref{section:conclusions}.


\section{Problem Statement}
\label{section:problemstatement}

Consider an uncertain dynamical system evolving within an open bounded 
region $\mathcal{S} \subset \mathbb{R}^n$.  Within $\mathcal{S}$, there is a 
closed set $\mathcal{O} \subset \mathcal{S}$ which represents forbidden 
areas (\emph{obstacles}). In that sense, the system can safely evolve only in the \emph{free
workspace} $\mathcal{P} \triangleq \mathcal{S}\setminus\mathcal{O}$. 

The dynamics of the system is given in the form of a 
\ac{sde} 
\begin{multline}
\label{eq:problemsystem}
\dif \mathrm{q}(t) = b\del[1]{\mathrm{q}(t)}\dif t + G\del[1]{\mathrm{q}(t)} \left[
u \del[1]{\mathrm{q}(t)} \dif t \right. \\
+\left. \Sigma \del[1]{\mathrm{q}(t)} \dif W(t)\right] \ , \ \mathrm{q}(0)  = \mathrm{q}_0
\end{multline}
where $\mathrm{q} \in \mathbb{R}^n$ is the \emph{state}, $b:\mathbb{R}^n \rightarrow \mathbb{R}^n$ 
is the drift term, $G: \mathbb{R}^n \rightarrow \mathbb{R}^m$ is the matrix of control 
vector fields, $u: \mathbb{R}^n \rightarrow \mathbb{R}^m$ is the \emph{control input}, 
and $\Sigma: \mathbb{R}^n \rightarrow \mathbb{R}^{m \times m}$ is the diffusion term. 
Let $W=\{W(t), \mathcal{F}_t:0 \le t < \infty\}$ be an $m$-dimensional Wiener process on 
the probability space $(\Omega,\mathcal{F},\mathbb{P})$, where $\Omega$ is the sample 
space, $\mathcal{F}$ is a $\sigma$-algebra on $\Omega$, $\mathbb{P}$ is the probability measure 
and $\{\mathcal{F}_t: t \geq 0\}$ is the filtration (i.e. an increasing family of sub-$\sigma$-algebras 
of $\mathcal{F}$) that is right continuous and $\mathcal{F}_0$ contains all $\mathbb{P}$-null 
sets.\footnote{The justification and the detailed definition for these mathematical constructions 
can be found in \cite{Karatzas}.} 

In a typical stochastic optimal control problem, one has to find a 
control sequence to steer the dynamics to a desired configuration, 
while minimizing the cost functional
\begin{align*}
&V(\mathrm{q},u) = \min_{u(t)} \mathbb{E} \sbr[3]{ \int_{0}^{\infty} L\del[1]{\mathrm{q}(s),u(s)}\dif s \; \Big| \; \mathrm{q}(0) = \mathrm{q}}\\
&\text{subject to } \mathbb{P}[\mathrm{q}(t) \in \mathcal{O}] = 0, \quad \forall t
\end{align*}
where the function $L$ is the incremental cost, assumed 
positive definite.


For general nonlinear systems, global analytic solutions to the above stochastic optimization problem 
are not available.  Numerical solutions can be obtained, but depending on the size of the dynamics
and the constraints of the problem, the computation cost can be too high for real-time implementation on processors on the lower side of the frequency scale.  This limitation motivates us to seek sub-optimal
solutions to the above problem by solving the following relaxation instead.
\begin{figure}[h]
      \centering
      \includegraphics[width=0.4\textwidth]{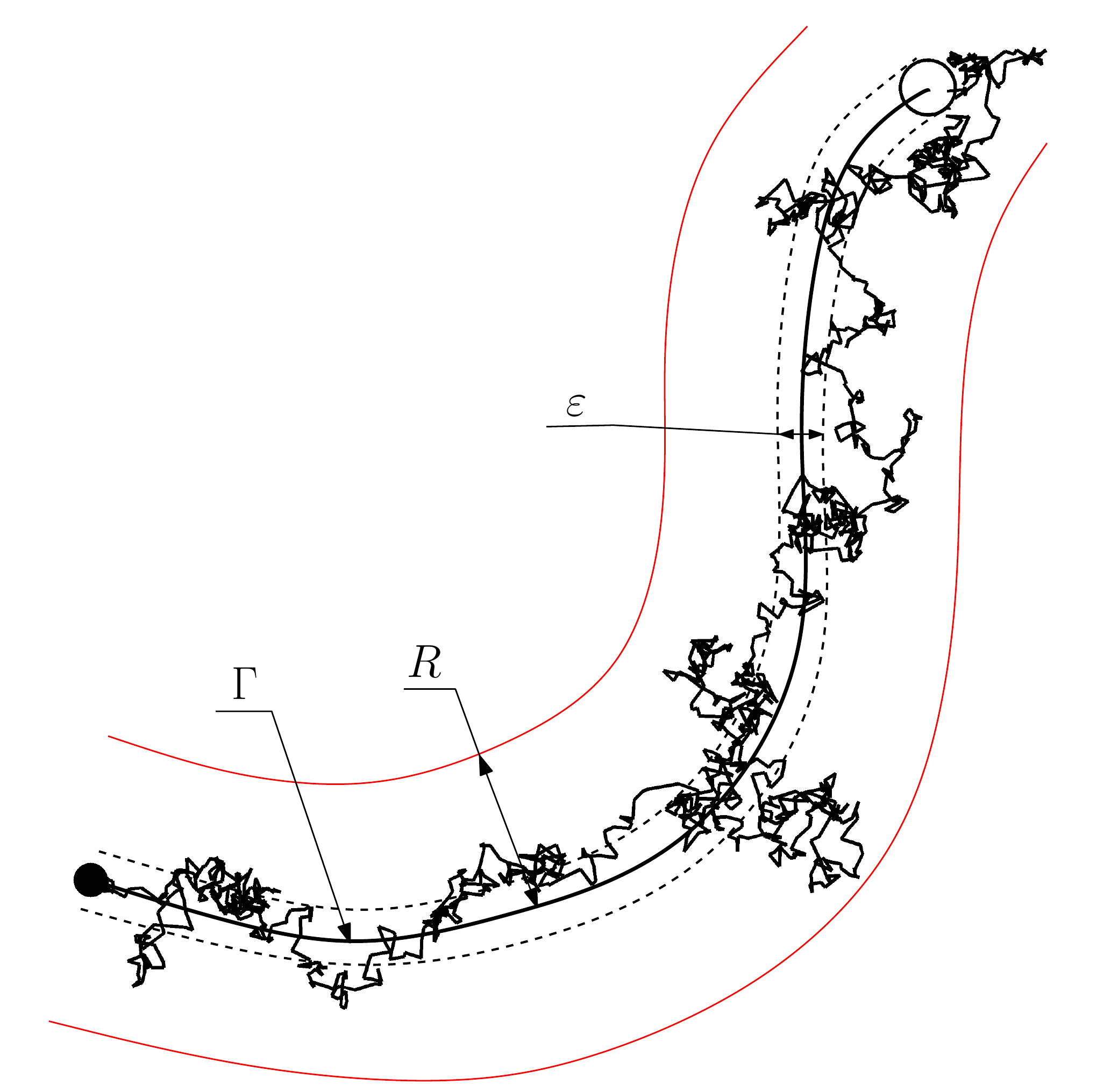}
      \caption{Illustration of the modified problem statement.  Obtain the solution $\Gamma$ to a 
      deterministic optimal control problem using the drift part of the dynamics
      (solid thick curve), and then maintain the full stochastic dynamics (thin sample path) $R$-close to that
      reference solution with accuracy $\varepsilon$.
\label{fig:problem_example}}
   \end{figure} 

\begin{problem}[Modified Problem Statement]
Find a sequence of feedback control laws $\{u_i(\mathrm{q})\}_{i=1}^N$ for \eqref{eq:problemsystem},
 such that  if $ \hat{\mathrm{q}}^*(t)$ is the solution of the system\footnote{Assume that the dimension of the controllability distribution is of rank $n$.}
\begin{equation} \label{deterministic dynamics}
\dot{\hat{\mathrm{q}}} = b(\hat{\mathrm{q}}) + G(\hat{\mathrm{q}}) u(\hat{\mathrm{q}})
\end{equation}
for a $\hat{u}^\ast(t)$ that minimizes the functional
\begin{align}
\label{eq:deterministic-cost}
&J(\mathrm{q},\hat{u}) = \min_{ \hat{u}} \int_{0}^{\infty} L\big(\hat{\mathrm{q}}(s), \hat{u}(s)\big) \dif s  \\
\nonumber & \text{subject to} \enspace 
\inf_{z \in \mathcal{O}, t>0} \| \hat{\mathrm{q}}(t) - z \| > R > 2\varepsilon > 0 \ , \qquad\hat{\mathrm{q}}(0) = \mathrm{q}  
\enspace. \nonumber
\end{align}
where, $R$ and $\varepsilon$ are positive constants. If $\Gamma = \{ \gamma \in \mathbb{R}^n \mid \exists t \in \mathbb{R} ; \gamma = \hat{\mathrm{q}}^*(t)\}$ denotes the
locus (path) of that solution, then for a given selection $\{\gamma_i\}_{i=1}^N \subset \Gamma$ of $N$ 
points on $\Gamma$ such that $\inf_{i,j} \| \gamma_i - \gamma_j\| > 2\varepsilon $, $\sup_{i,j} \| \gamma_i - \gamma_j\| < R - 2\varepsilon $ and $\hat{\mathrm{q}}_N = 0$, 
the application of $\{u_i(\mathrm{q})\}$ 
to \eqref{eq:problemsystem} results in sample paths $\mathrm{q}(t)$ that achieve
\begin{enumerate}[(i)]
\item $\mathbb{P} \left[ \inf_{\gamma \in \Gamma} \| \mathrm{q}(t) - \gamma \| < R\right] = 1,\; \forall t >0$ (almost-sure safety);
\item $\mathbb{P} \left[ \exists\, t_s < \infty : \|\mathrm{q}_N - \mathrm{q}(t_s) \| < \varepsilon \right] = 1$
(almost-sure convergence with accuracy $\varepsilon >0$);
\item $\mathbb{E} \left[ \int_{t_{i-1}}^{t_{i}} L\big(\mathrm{q}(s),u_i(s)\big)\dif s + \Phi\big(\mathrm{q}(t_{i})\big) \right]$
is minimized, 
where $t_{i-1}$ and $t_i$ are the first times $\mathrm{q}(t)$ enters an $\varepsilon$-neighborhood
of $\gamma_{i-1}$ and $\gamma_i$, respectively, and $\Phi(\mathrm{q}): \mathbb{R}^n \to \mathbb{R}_+$
is a \emph{terminal cost} function (local optimality).
\end{enumerate}
\end{problem}
\vspace{2ex}

Even in this form, the problem does not lend itself to efficiently computed solutions because
of the nonlinear infinite-horizon optimal control problem that needs to be solved to obtain $\Gamma$.
For this reason, the solution $(\hat{\mathrm{q}}^\ast,\hat{u}^\ast)$ of the deterministic optimal control
problem will be approximated by the solution of the receding horizon problem
\begin{subequations} \label{standard-rh}
\begin{align} \label{deterministic-rh}
&J_T(\mathrm{q},u_\mathsf{rh}) 
= \min_{u(t)} \int_{0}^{T} L\big(z(s), u(s)\big)\dif s + Q\big(z(T)\big) \\
 & \text{subject to} \enspace 
\dot{z} = b(z) + G(z)u\ , \qquad z(0) = \mathrm{q} \label{deterministic-dynamics}
\end{align}
\end{subequations}
where $T$ is the prediction horizon of the optimization, function $L$ is the same as in 
\eqref{eq:deterministic-cost}, and $Q: \mathbb{R}^n \to \mathbb{R}_+$ is
the terminal cost which approximates the truncated tail of the integral in
\eqref{eq:deterministic-cost}.  
The idea behind a receding horizon optimization strategy is that one solves
the finite horizon optimal control problem and obtains a control law 
$u_\mathsf{rh}(t)$ computed for $z(0) = \hat{\mathrm{q}}(t_0)$.
Control law $u_\mathsf{rh}(t)$ is applied on \eqref{deterministic dynamics} for the time interval 
$[t_0,t_1]$, $t_1 < t_0 + T$, during which time a new control law is 
computed for $z(0) = \hat{\mathrm{q}}(t_1)$, with $\hat{\mathrm{q}}(t_1)$ predicted based on
\eqref{deterministic dynamics}.  At time $t=t_1$,
the control law is updated and the process is repeated.
It is known \cite{jad-01} that if $Q(z)$ is 
a control Lyapunov function for \eqref{deterministic-dynamics}, and
\begin{equation} \label{condition-of-ali}
\min_{u}\big\{\dot{Q}(z) + L(z,u) \big\} \le - \eta(\|z\|) \enspace,
\end{equation}
where $\eta$ is a class-$\mathcal{K}$ function of $\|z\|$, then application
of $u_\mathsf{rh}(t)$ results in $\|z\| \to 0$ asymptotically with time.
We assume that $Q$ is a control Lyapunov function for \eqref{deterministic-dynamics} here as well,
and that there exists a positive definite function $\eta$ satisfying \eqref{condition-of-ali}.
In the our modified problem setting, $\{u_i(\mathrm{q})\}$ takes the 
place of $u_\mathsf{rh}(t)$ and $\hat{\mathrm{q}}(t_i) \equiv \{\gamma_i\}$.
%
%

\section{An intuitive example}
\label{section:intuitiveexample}

Consider a robot moving in a two-dimensional space, and described 
by single integrator dynamics perturbed by stochastic noise:
\begin{equation}
\label{eq:examplesystem}
\dif \mathrm{q} (t) = u\big(\mathrm{q}(t)\big) \dif t + \dif W(t); \enspace \mathrm{q}(0) = \mathrm{q}_0
\end{equation}
where $\mathrm{q} = [x \ y]^\intercal$ is the state vector, $u(\mathrm{q})$ is the control input and $W(t)$ is 
a two-dimensional Wiener process.  The objective is to find a feedback control law $u(\mathrm{q})$ to 
drive the system $\varepsilon$-close to the origin, while avoiding the boundary of a circle 
with radius $R$, centered at the origin. 

An obvious control strategy is to just steer the system along a direction toward the origin. 
A normalized vector pointing to the origin from the current state $\mathrm{q}$ is 
$- \frac{\mathrm{q}}{\|\mathrm{q} \|}$.  To satisfy the state constraints, the system should be 
forced away from the circle with radius $R$. 
One way to achieve this is by weighting the control input by a factor 
$\frac{1}{R - \|\mathrm{q}\|}$. This results in
\begin{equation}
\label{eq:intuitive-control}
u(\mathrm{q}) = - \frac{\mathrm{q}}{(R - \|\mathrm{q}\|)\|\mathrm{q} \| } \enspace.
\end{equation}
It turns out, this intuitive design yields a stochastic control law which is actually optimal. In fact, 
\eqref{eq:intuitive-control} minimizes the cost
\[V(\mathrm{q},u) = \mathbb{E} \left[  \int_0^{\tau} \frac{1}{2}u^\intercal\big(\mathrm{q}(s)\big) \,
u\big(\mathrm{q}(s)\big) \dif s + \Phi\big(\mathrm{q}(\tau)\big) \; \Big| \; \mathrm{q}(0) = \mathrm{q}\right]\]
where
\begin{align*}
\Phi\big(\mathrm{q}(\tau)\big)  &= 
\begin{cases} 0 & \text{on } {\| \mathrm{q}(\tau)\| = \varepsilon}\\
\infty &\text{on } {\| \mathrm{q}(\tau)\| = R} 
\end{cases} 
\end{align*}
and $\tau$ is the first time the state hits either the circle with radius $\varepsilon$ or that with radius $R$. 
Control law \eqref{eq:intuitive-control} guarantees that the system avoids the $R$-radius circle boundary
with probability one, and consequently hits the $\varepsilon$-radius circle with probability one,
because it is known that it almost surely exits the domain $\{\varepsilon < \|\mathrm{q}\|
< R \}$ somewhere (see \cite[Lemma 7.4]{Karatzas}, and the discussion in the section that follows).
Sample paths for the given controller are shown in Fig. \ref{fig:intuitive-example-single} for 
different initial conditions. 

\begin{figure}[ht]
\centering
\subfigure[Sample Paths]{
          \includegraphics[width=0.44\textwidth]{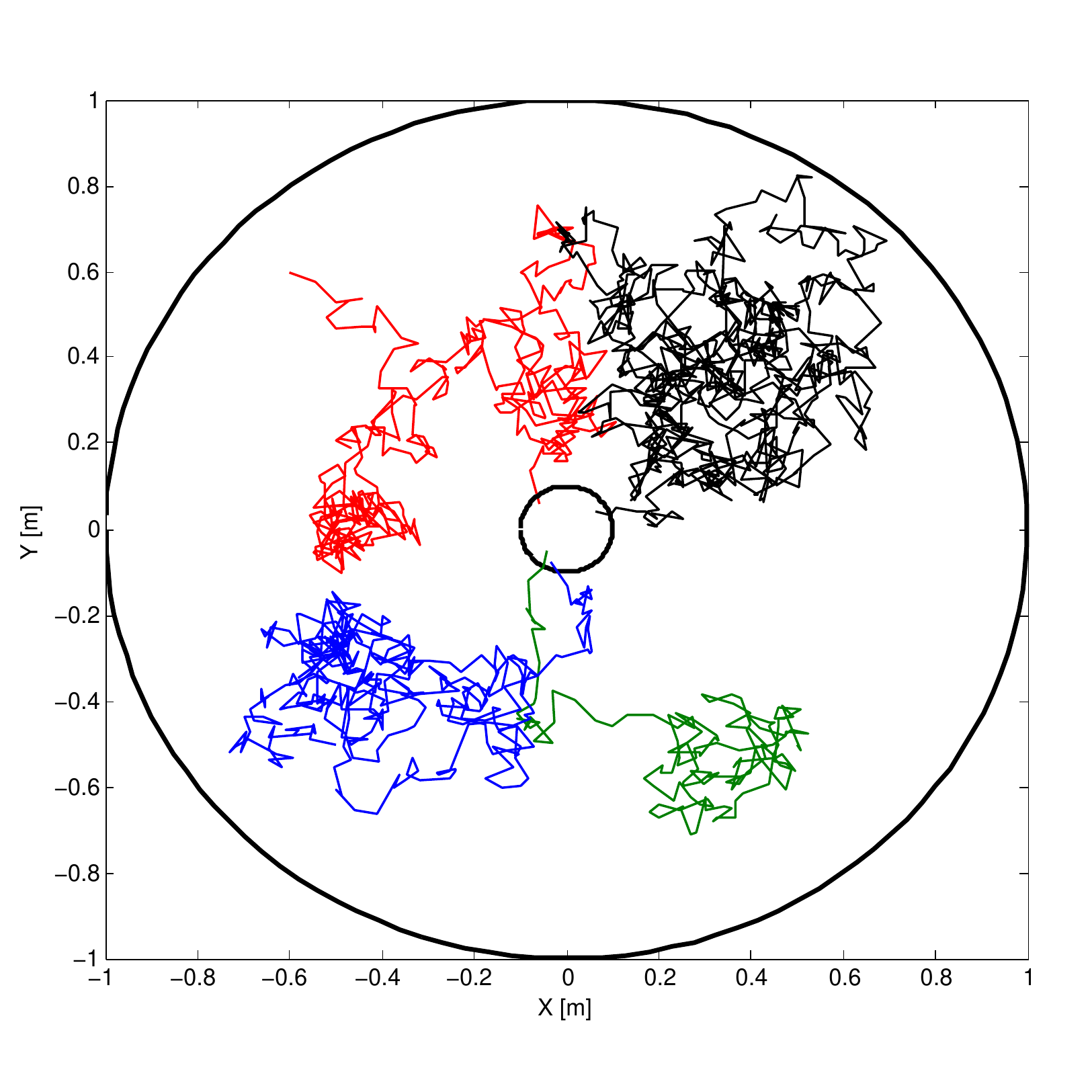}
	\label{fig:intuitive-example-single}
}
\subfigure[Recursive Execution]{
             \includegraphics[width=0.44\textwidth]{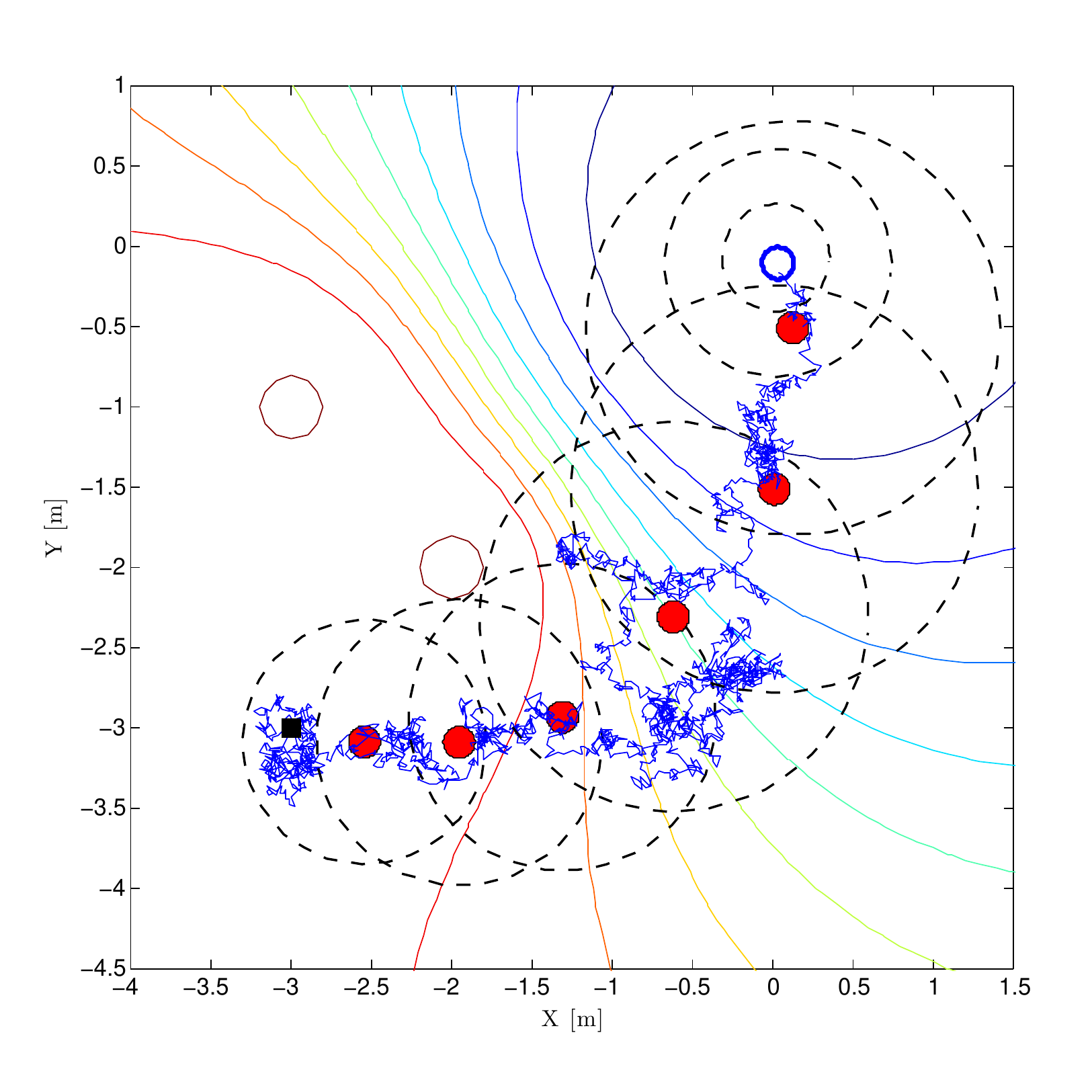}
	\label{fig:intuitive-example-rhc}
}
\caption[]{The stochastic optimal controller \subref{fig:intuitive-example-single}.  Sample paths for a single integrator in 2D. \subref{fig:intuitive-example-rhc} The trajectory resulting from implementation of the stochastic optimal controller in a receding horizon framework.\label{fig:intuitive-example}}
\end{figure}

Assume now that as soon as the system hits the circle of radius $\varepsilon$ around the origin,
a coordinate transformation occurs which shifts the origin to a point within distance $R$ from
its prior location.  Then the same controller can be reapplied to drive the system to a
$\varepsilon$-neighborhood of the new origin.  An iterative scheme based on this idea can be
used to steer the system from point $A$ to point $B$ in a receding horizon manner. 
A sample trajectory resulting from an implementation of such a receding horizon controller is shown in 
Fig.~\ref{fig:intuitive-example-rhc}.  

While the design of the controller \eqref{eq:intuitive-control} that enables convergence
to way-points is simple for the case of the stochastic single integrator of \eqref{eq:examplesystem}, 
is not the case for general stochastic nonlinear systems.  
In following sections, we outline a mathematical framework that allows the computation 
of receding horizon controllers for more complex stochastic nonlinear systems.


\section{Stochastic Optimal Control with Exit Constraints}
\label{section:dayscontroller}
In this section we design stochastic optimal controllers with exit constraints.  
These controllers guarantee convergence to a given set, 
and satisfaction of state constraint, both with probability one.
Consider the stochastic system \eqref{eq:problemsystem}
\begin{multline*}
\dif \mathrm{q}(t) = b\big(\mathrm{q}(t)\big)\dif t + G\big(\mathrm{q}(t)\big) \left[
u\big(\mathrm{q}(t)\big) \dif t \right. \\
\left. + \Sigma\big(\mathrm{q}(t)\big)\dif W(t)\right]\enspace, \quad \mathrm{q}(0) = \mathrm{q}_0
\end{multline*}
which evolves within a bounded domain $\mathcal{D} \subseteq \mathcal{P}$ 
with a $\mathcal{C}^2$ boundary $\partial \mathcal{D}$ and closure denoted $\overline{\mathcal{D}}$. 
Assume that $b(\mathrm{q})$, $G(\mathrm{q})$, $\Sigma(\mathrm{q})$, and $\Sigma^{-1}(\mathrm{q})$ are bounded and Lipschitz continuous on 
$\mathcal{D}$. The objective is to find the control $u(\mathrm{q})$ that yields 
\begin{multline}
\label{Day-COST}
V(\mathrm{q},t) =  \min_{u(\mathrm{q})} \mathbb{E}\left[ 
\int_{0}^{t \wedge \tau_{\mathcal{D}}}L(\mathrm{q}(s), u(s))\dif s  \right. \\ 
\left. +  \Phi\big(\mathrm{q}(t \wedge 
\tau_{\mathcal{D}})\big) \; \Big| \; \mathrm{q}(0) = \mathrm{q} \right],
\end{multline}
where $\tau_{\mathcal{D}}$ is the first exit time from the domain $\mathcal{D}$. (Notation $t \wedge 
\tau_{\mathcal{D}}$ is standard for $\min(t,\tau_{\mathcal{D}})$.) The incremental cost $L(\mathrm{q},u)$
in \eqref{Day-COST} is defined as
\begin{equation*}
L(\mathrm{q}, u) \triangleq 
l(\mathrm{q},t) + \frac{1}{2}u^\intercal\, a^{-1}(\mathrm{q})\, u 
\end{equation*}
where $a(\mathrm{q}) =  \Sigma(\mathrm{q}) \, \Sigma^\intercal(\mathrm{q})$. We impose an admissibility condition that there exist a set of control inputs $u^\mathrm{q} \in U^\mathrm{q}$ such that for all initial conditions $\mathrm{q}$ and control inputs $u^\mathrm{q}$, the cost $V(\mathrm{q},\tau_{\mathcal{D}}) < \infty$.

The \ac{hjb} equation associated with \eqref{Day-COST} is
\begin{equation}
\label{Day-hjb}
\min_{u(\mathrm{q})}\Big\{\mathcal{A} V(\mathrm{q},t) + L\big(\mathrm{q}(t),u(t)\big)\Big\} = 0
\end{equation}
where $\mathcal{A}$ the second-order partial differential operator 
\begin{equation*}\label{generator}
\mathcal{A} \triangleq \frac{\partial}{\partial t} 
+ \sum_{j=1}^n \big(b_j(\mathrm{q}) + G_j(\mathrm{q})u_j(\mathrm{q})\big) \frac{\partial}{\partial \mathrm{q}_j} 
+ \frac{1}{2}\sum_{j=1}^n \sum_{k=1}^n a_{jk}(\mathrm{q}) \frac{\partial^2}{\partial \mathrm{q}_j \partial \mathrm{q}_k} \ .
\end{equation*}
Equation \eqref{Day-hjb} is written in matrix form as follows
\begin{multline*}
\min_{u(\mathrm{q})}\Big\{\partial_t V(\mathrm{q},t) + \partial_\mathrm{q} V^\intercal(\mathrm{q},t)\; b(\mathrm{q}) 
+ \partial_\mathrm{q} V^\intercal(\mathrm{q},t)\; G(\mathrm{q}) \, u(\mathrm{q}) \\
+ \frac{1}{2}\mathrm{tr} \big\{\partial_{\mathrm{qq}}V(\mathrm{q},t) \; G(\mathrm{q})\, \Sigma(\mathrm{q})\, \Sigma^\intercal(\mathrm{q})
G^\intercal(\mathrm{q})\big\} \\
 + l(\mathrm{q},t)+\frac{1}{2}u^\intercal(\mathrm{q})\, a(\mathrm{q})^{-1}\, u(\mathrm{q})\Big\} = 0 
\end{multline*}
where $\mathrm{tr}$ stands for trace.
The optimal control law $u^{\ast} \in U^\mathrm{q}$ that solves \eqref{Day-hjb} is then given as
\begin{equation}
\label{eq:Dayoptimalcontrol}
u^*(\mathrm{q})= - a(\mathrm{q}) G^\intercal(\mathrm{q})\;\partial_\mathrm{q} V(\mathrm{q},t) \ .
\end{equation}
Substituting \eqref{eq:Dayoptimalcontrol} in \eqref{Day-hjb} yields
\begin{multline}
\label{eq:hjb-withcontrol}
 \partial_t V(\mathrm{q},t) + \partial_\mathrm{q} V^\intercal(\mathrm{q},t)\;b(\mathrm{q})  \\
 - \frac{1}{2}\partial_\mathrm{q} V^\intercal(\mathrm{q},t)\;G(\mathrm{q})\, a(\mathrm{q})\, G^\intercal(\mathrm{q})\; \partial_\mathrm{q} V(\mathrm{q},t) \\
+ \frac{1}{2}\mathrm{tr}\big\{\partial_{\mathrm{qq}}V(\mathrm{q},t)\;G(\mathrm{q})\, \Sigma(\mathrm{q})\, \Sigma^\intercal(\mathrm{q})\, 
G^\intercal(\mathrm{q})
\big\} \\ + l(\mathrm{q},t) = 0 \ .
\end{multline}

Using the logarithmic transformation \cite{Fleming}
\begin{equation*}
\label{solution}
V(\mathrm{q},t) = - \log g(\mathrm{q}) \ ,
\end{equation*}
and with substitution in \eqref{eq:hjb-withcontrol} we get
\begin{multline}
\label{eq:PDE-g(x)}
- \partial_t g(\mathrm{q},t) = -l(\mathrm{q},t)\,g(\mathrm{q},t) +  \partial_\mathrm{q} g^\intercal(\mathrm{q},t)\;b(\mathrm{q}) \\
+ \frac{1}{2}\mathrm{tr}\big\{\partial_{\mathrm{qq}}g(\mathrm{q},t)\; G(\mathrm{q})\, \Sigma(\mathrm{q})\, \Sigma^\intercal(\mathrm{q})\, 
G^\intercal(\mathrm{q})\big\} = 0
\end{multline}
with boundary condition
\[g(\mathrm{q}, t \wedge  \tau_{\mathcal{D}}) = \exp\Big(-\Phi\big(\mathrm{q}(t \wedge \tau_{\mathcal{D}})\big)\Big), 
\enspace \mathrm{q} \in \partial \mathcal{D} \enspace .
\]

Analytic solutions of the above \ac{pde} are generally not possible for complex nonlinear systems. 
However, the Feynman-Kac formula \cite{Karatzas} relates a certain \ac{pde} with an equivalent 
\ac{sde}, and facilitates the numerical solution of the \ac{pde} through numerical simulation of the \ac{sde}.
Using the Feynman-Kac formula \cite{Karatzas}, the solution of \eqref{eq:PDE-g(x)} takes the form
\begin{align}
\label{eq:feynmen-kac}
\nonumber g(\mathrm{q}) &= \mathbb{E} \left[ g(\mathrm{q}, t \wedge \tau_{\mathcal{D}}) \; 
\exp \left( \int_0^{t \wedge \tau_{\mathcal{D}}} l(\mathrm{q},s)\dif s \right) \;\Big|\; \zeta(0) = \mathrm{q} \right] \\
&= \mathbb{E} \left[ \exp\Big( -\Phi\big(\zeta(t \wedge \tau_{\mathcal{D}})\big)\Big)
\exp \left( \int_0^{t \wedge \tau_{\mathcal{D}}} l(\mathrm{q},s)\dif s \right)  \;\Big|\; \zeta(0) = \mathrm{q} \right]
\end{align}
where $\zeta(t)$ is the Markov process 
\begin{equation}
\label{eq:aux-process}
\dif \zeta(t) = b\big(\zeta(t)\big)\dif t + G\big(\mathrm{q}(t)\big)\, \Sigma\big(\zeta(t)\big) \dif W(t) 
\end{equation}
evolving on the same bounded open set $\mathcal{D} \subset \mathbb{R}^n$.

\paragraph*{Stochastic Optimal Control with Exit Constraints}

Under the assumption 
\begin{equation}
\label{eq:finite-time-condition}
\min_{\mathrm{q} \in \overline{\mathcal{D}}} a_{ll}(\mathrm{q}) > 0
\end{equation}•
for some $1 \leq l \leq m$, one 
can show that $\mathbb{E} [\tau_{\mathcal{D}} \mid \mathrm{q}(0) = \mathrm{q}_0] < \infty$, $\forall \mathrm{q}_0 \in 
\overline{\mathcal{D}}$ \cite[Lemma 7.4]{Karatzas}. This means that the system will escape 
the domain $\mathcal{D}$ in finite time with probability one. The assumption that $\Sigma$ and $\Sigma^{-1}$ are bounded, ensures satisfaction of \eqref{eq:finite-time-condition}.

A guarantee that the system does not exit from a specific portion of the boundary can be obtained
by imposing an infinite penalty for touching that surface.  
Consider a partition of the boundary $\partial \mathcal{D}$ in the form 
$\mathcal{N} \subset \partial \mathcal{D}; \enspace \mathcal{M} 
= \partial \mathcal{D} \setminus \mathcal{N}$.  
Then choose $\Phi$ as  
\begin{equation*} 
\label{Phi}
\Phi=+\infty \cdot \mathcal{X}_{\mathcal{M}};
\end{equation*}
and
\begin{align*}
\mathcal{X}_{\mathcal{M}}  &= 
\begin{cases} 0 & \text{on } {\mathcal{N}}\\
1 &\text{on } {\mathcal{M}} 
\end{cases} 
\end{align*}
Assuming that $l(\mathrm{q},t) \equiv 0$ and letting $t \to \infty$, 
the resulting parabolic \ac{pde} \eqref{eq:PDE-g(x)} gives rise to the Dirichlet problem
\begin{align}
\label{eq:Dirichlet}
\partial_\mathrm{q} g^\intercal(\mathrm{q})\, b(\mathrm{q}) &+ 
\frac{1}{2}\mathrm{tr}\big\{\partial_{\mathrm{qq}}g(\mathrm{q})\;G(\mathrm{q})\, \Sigma(\mathrm{q})\,  \Sigma^\intercal(\mathrm{q}) \, G^\intercal(\mathrm{q})\big\} = 0 \\
\nonumber & \begin{cases} g\big(\mathrm{q}(\tau_{\mathcal{D}})\big) = 1 
&  \mathrm{q}(\tau_{\mathcal{D}}) \in \mathcal{N} \\
g\big(\mathrm{q}(\tau_{\mathcal{D}})\big) = 0 & \mathrm{q}(\tau_{\mathcal{D}}) \in \mathcal{M}
\end{cases}
\end{align}
Then \eqref{eq:feynmen-kac} suggests that $g(\mathrm{q})$ is in fact the probability that 
the sample path of \eqref{eq:aux-process} 
from $\mathrm{q}$ hits boundary $\mathcal{N}$ before $\mathcal{M}$.
Function $g(\mathrm{q})$ takes the form
\begin{equation}
\label{eq:g(x)}
g(\mathrm{q}) = \mathbb{P} \left[\zeta(\tau_{\mathcal{D}}) \in {\mathcal{N}} \mid \zeta(0) = \mathrm{q}\right].
\end{equation}
and $\zeta(t)$ is the Markov process \eqref{eq:aux-process}. Now if the admissibility condition is satisfied then the optimal control with infinite penalty on exit boundary is equivalent to a constraint (see \cite{Day}), 
\[\mathbb{P}\big[ \mathrm{q}(\tau_{\mathcal{D}}) \in  \mathcal{M} 
\mid \mathrm{q}(0) = \mathrm{q}\big] = 0 
\]

\begin{remark}
The computation of control input \eqref{eq:Dayoptimalcontrol} requires $g(\mathrm{q})$, which can be found by either by solving \eqref{eq:PDE-g(x)} analytically, or numerically simulating \eqref{eq:aux-process} and computing \eqref{eq:g(x)}. As $\Phi$ imposes an infinite penalty on state trajectories that exit through $\mathcal{M}$, the above construction forces the system to exit through $\mathcal{N}$ while avoiding $\mathcal{M}$ with probability one.  The problem of stochastic optimal control with terminal cost at exit time is discussed in \cite{Fleming}, while a specific problem of exit constraints was discussed in \cite{Day}. The latter reference also shows that imposing an exit constraint is equivalent to having infinite penalty on exit location used in this section. We use these two results and thus by defining $\mathcal{M}$ to be the boundary of state constraint regions, we achieve the guarantees that state constraints are satisfied, and convergence to a desired region is achieved in finite time. 
\end{remark}

\section{Stochastic Receding Horizon Control Design}
\label{section:framework}

After the presentation of the continuous-time constrained stochastic optimal control
formulation in its general setting, we proceed with the description of the implementation
of these techniques inside the receding horizon framework that was outlined in the
example of Section~\ref{section:intuitiveexample}.  Out of this process emerges a simple,
special case of a \ac{gshs}, for which the existence of solutions has been established in
literature~\cite{bujorianu-lygeros}.  The section concludes with an examination of the
closed loop stability and convergence properties of this simplified \ac{gshs}, and a discussion
on how input saturation affects these properties.

\subsection{Deterministic Planning}
We begin by computing a receding horizon path using \eqref{deterministic-dynamics}
\begin{equation*}
\dot{\hat{\mathrm{q}}} = b(\hat{\mathrm{q}}) + G(\hat{\mathrm{q}})u(\hat{\mathrm{q}}) \ .
\end{equation*}
Let $\hat{\mathrm{q}}^*_T(t):[t_0,t_0+T] \to \mathbb{R}^n$ be the trajectory that,
for a prediction horizon $T$, minimizes the cost functional
\begin{align*}
\label{eq:discretecost}
 J_T(\mathrm{q},u) =& \min_{\hat{u}(t)}\int_{t_0}^{t_0 + T} L\big(\hat{\mathrm{q}}(s),u(s)\big)\dif s + Q\big(\hat{\mathrm{q}}(t_0+T)\big) \\ 
&\text{subject to} \enspace
\inf_{\substack{z \in \mathcal{O} \\ t\in[t_0,t_0+T]}} \| \hat{\mathrm{q}}(t) - z\| > R \ , \qquad  \hat{\mathrm{q}}(t_0) = \mathrm{q} \ 
\end{align*}
with functions $L$ and $Q$ as in \eqref{standard-rh}.
Define a \emph{receding horizon path} as
\begin{equation}
\label{eq:path}
\Gamma_T \triangleq \{ \gamma \in \mathbb{R}^n \mid \exists t \in [t_0,t_0+T] : \gamma = \hat{\mathrm{q}}^*_T(t)\} \ .
\end{equation}

Here we adopt the approach of \cite{Tanner-10} to obtain an approximation of $\hat{\mathrm{q}}^\ast_T$ and
consequently compute $\Gamma_T$.  The latter, however, can also be obtained through an array of alternative
methodologies, including potential field methods \cite{Kodi2}, rapidly exploring random trees
\textsc{rrt}s \cite{Lavalle00rapidly-exploringrandom}, or cell decomposition methods \cite{Lav06}.


\subsection{Way-point Generation}
\label{way-points}
Let the closed ball of radius $\varepsilon$ centered at a point $\gamma$ is denoted $\overline{\mathcal{B}}_\gamma(\varepsilon) \triangleq \{ \mathrm{q} :  \| \mathrm{q} - \gamma \| \leq \varepsilon \}$,
and its complement, $\mathcal{B}^c_\gamma(\varepsilon)$.
Now consider a sequence of points
$\{ \gamma_i\}_{i = 0}^{N} \in \Gamma_T$ with $\gamma_0 := \mathrm{q}(t_0)$ and $\gamma_N := \hat{\mathrm{q}}^*_T(t_0 + T)$, satisfying
\begin{equation}
\label{eq:convergence-condition}
\max_{a \in \overline{\mathcal{B}}_{\gamma_i}(\varepsilon)}\{Q(a)\} 
- \min_{b \in  \overline{\mathcal{B}}_{\gamma_{i-1}}(\varepsilon)}\{Q(b)\} \leq -\eta(\|\gamma_{i-1}\|)
\enspace,
\end{equation}
where $\gamma$ is the positive definite function in \eqref{condition-of-ali}.
Define domains $\mathcal{D}_i$, for $i = 1,\ldots,N$, such that
$ \bigcup_i \mathcal{D}_i \cap \mathcal{O} = \emptyset$ and
\begin{equation}
\label{eq:conditions}
 \overline{\mathcal{B}}_{\gamma_{i-1}}(\varepsilon) 
 \subset \mathcal{D}_i \subset \mathcal{B}^c_{\gamma_i}({\varepsilon}) 
\end{equation}
Decompose the boundaries of those domains as follows (see Fig.~\ref{example-localdomains}):
\begin{align}
\mathcal{N}_i &\triangleq \partial \mathcal{D}_i \cap \overline{\mathcal{B}}_{\gamma_i}({\varepsilon})  \\
\label{eq:boundary-definition}
\mathcal{M}_i &\triangleq \partial \mathcal{D}_i \setminus \mathcal{N}_i
\end{align}
The domains $\mathcal{D}_i$ are defined such that $\mathcal{N}_i$ is non-empty for all $i$.

\begin{figure}[h]
      \centering
      \includegraphics[width=0.45\textwidth]{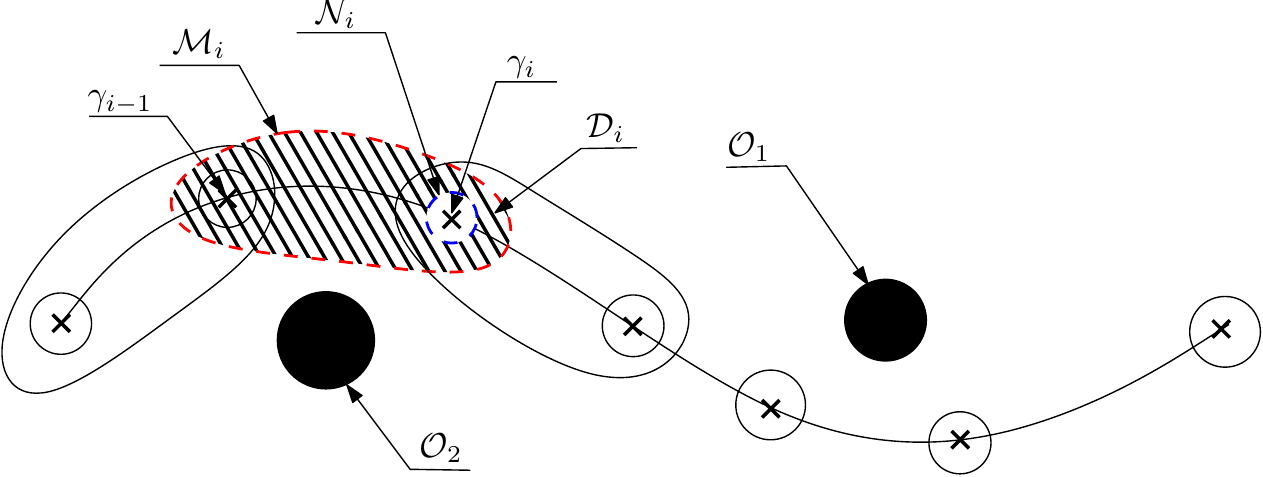}
      \caption{Illustration of the local domains $\mathcal{D}_i$ (hashed region). 
      Also shown are the receding horizon path $\Gamma_T$ (continuous curve),
      the way-points defined by the sequence $\{\gamma_i\}$ (crosses),
      the obstacles $\mathcal{O}_j$ (solid disks), the boundaries $\mathcal{N}_i$ 
      (dashed blue inner boundary) 
      and $\mathcal{M}_i$ (dashed red outer boundary).   \label{example-localdomains} }
   \end{figure}

\subsection{Stochastic optimal controllers}

The system state is a Markov process $\mathrm{q}(t)$ that 
evolves between way-points according to the \ac{sde}
\begin{equation}
\dif \mathrm{q}(t) = b\big(\mathrm{q}(t)\big)\dif t + G\big(\mathrm{q}(t)\big) \big[u_i\big(\mathrm{q}(t)\big)\dif t 
+ \Sigma\big(\mathrm{q}(t)\big) \dif W(t)\big]
\label{localSDE1}
\end{equation}
where $\Sigma(\mathrm{q})$, $b(\mathrm{q})$, $G(\mathrm{q})$, $\Sigma^{-1}(\mathrm{q})$ satisfy
the requirements of Section~\ref{section:dayscontroller}, and together with
$u_i$, are all bounded in $\mathcal{D}_i$.  The latter 
is the control input responsible for taking the state
from $\mathcal{N}_{i-1}$ to $\mathcal{N}_i$ while
avoiding $\mathcal{M}_i$.   
Let $t_{i-1}$ be the first time instant when
$\mathrm{q}(t) \in \mathcal{N}_{i-1}$.  

When \eqref{localSDE1} under $u_i$ hits $\mathcal{N}_i$ at some time $t_i$, 
it undergoes a forced transition with $u_i$ switching to $u_{i+1}$, and the switch occurs upon the state hitting a part of the boundary $\mathcal{N}_i$. 
Control law $u_i$ gives a solution to the stochastic optimal control problem 
\begin{multline}
\min_{u_i} \mathbb{E} \left[ \int_{t_{i-1}}^{t_{i}} \frac{1}{2}
u_i^\intercal\big(\mathrm{q}(s)\big)\,a^{-1}\big(\mathrm{q}(s)\big)\, u_i\big(\mathrm{q}(s)\big) \dif s \right.\\
\left. + \Phi\big(\mathrm{q}(t_{i})\big) \;\Big|\; \mathrm{q}(t_{i-1}) = \mathrm{q} \right] =: V(\mathrm{q})
\label{localcost1}
\end{multline}
Notice that by setting now the terminal time to $t_i$ allows the value function $V$ to be
time-invariant.  We define the exit time for the process driven by $u_i$ to be 
$\tau_i = t_i - t_{i-1}$.
Function $\Phi$ is again chosen in a way that it imposes infinite on the
state hitting $\mathcal{M}_i$.  Similarly to the analysis of Section \ref{section:dayscontroller},
the solution of \eqref{localcost1} is
\begin{equation*}
V(\mathrm{q}) = -\log g(\mathrm{q})
\end{equation*}
where $ g(\mathrm{q}) = \mathbb{P} \left[\zeta(\tau_{i}) \in {\mathcal{N}_i} \mid \mathrm{q}(t_{i-1}) = \mathrm{q}\right]$,
and the optimal control law for $\mathrm{q} \in \mathcal{D}_i$ is
\begin{equation}
\label{eq:optimalcontrol}
{u_i}^*(\mathrm{q}) =  - a(\mathrm{q})\,G^\intercal(\mathrm{q})\;\partial_\mathrm{q} V(\mathrm{q}) \ .
\end{equation}

When applied, $u^*_i(\mathrm{q})$ satisfies the following probabilistic conditions:  
\begin{align}
\label{eq:finiteexittime} &\mathbb{E}\big[\tau_{i} \mid \mathrm{q}(t_{i-1}) = \mathrm{q}\big] 
< \infty\\
\label{eq:probabilitycondition}& \mathbb{P}\big[ \mathrm{q}(t_{i}) \in  \mathcal{M}_i 
\nonumber \mid \mathrm{q}(t_{i-1}) = \mathrm{q}\big] = 0 \\
 & \iff \mathbb{P}\big[\mathrm{q}(t_{i}) \in  \mathcal{N}_i \mid \mathrm{q}(t_{i-1}) = \mathrm{q}\big] = 1\ .
\end{align}
Condition \eqref{eq:finiteexittime} translates into the process $\mathrm{q}(t)$ exiting 
$\mathcal{D}_i$ in finite time with probability one which is guaranteed by assumption \eqref{eq:finite-time-condition}. 
Condition \eqref{eq:probabilitycondition} is equivalent to saying that the process $\mathrm{q}(t)$ 
reaches an $\varepsilon$-neighborhood of way-point $\gamma_i$ with probability one, 
before violating any state constraints (see \cite{Day}).

Given a receding horizon path $\Gamma_T$ seeded with a sequence of way-points $\{\gamma_i\}_{i=0}^N$,
the process of transitioning from way-point $\gamma_{i-1}$ to way-point $\gamma_i$ under \eqref{eq:optimalcontrol}
is repeated.  By the time a new way-point is reached, the path $\Gamma_T$ has been
recomputed in a receding horizon manner, and the way-point sequence $\{\gamma_i\}_{i=0}^N$ 
redefined with the initial element $\gamma_0$ being the way-point just reached.
What is important for real-time implementation is that for predetermined domains $\mathcal{D}_i$,
\eqref{eq:optimalcontrol} can be precomputed off-line, numerically in general but
also analytically in special cases where $b$, $G$ and $\Sigma$ are such that the boundary 
value problem for \ac{pde} \eqref{eq:PDE-g(x)} can be solved explicitly.

\subsection{The Resulting Stochastic Hybrid System}

Closing the loop around \eqref{localSDE1} by means of a receding horizon strategy 
gives rise to a switched stochastic hybrid system, where
switching is due to $u_i$ and occurs as a forced transition whenever $\mathrm{q}(t)$ hits
a set $\mathcal{N}_i$. 
The hybrid state here is just $(i, \mathrm{q})$ where $\mathrm{q} \in \mathbb{R}^n$ and 
$i \in \{0,1,2, ..., N\} =: \mathcal{I}$ are the continuous and discrete states, respectively.
This system can be classified as a \ac{gshs}, a general modeling framework of which is 
described in \cite{bujorianu-lygeros}; however, it is a very simplified version of the
the general definition of \cite{bujorianu-lygeros}, which can be adequately described
by defining only the following three components: the continuous dynamics, the
discrete dynamics, and the reset condition.

\paragraph*{Continuous Dynamics} The continuous state $\mathrm{q}(t)$ evolves according to the \ac{sde} 
\eqref{localSDE1}
\begin{equation}
\label{eq:SHS-Continuous}
\dif \mathrm{q}(t) = b\big(\mathrm{q}(t)\big) \dif t + G\big(\mathrm{q}(t)\big) \left[u\big(i,\mathrm{q}(t)\big)
\dif t + \Sigma\big(\mathrm{q}(t)\big)\dif W(t)\right]
\end{equation}
where we have just replaced $u_i\big(\mathrm{q}(t)\big)$ with $u\big(i,\mathrm{q}(t))$ to emphasize the
explicit dependence of the control input on the discrete state $i$, making it a function
of the hybrid state $(i,\mathrm{q})$:
$u:\mathcal{I} \times \mathbb{R}^n \to \mathbb{R}^m$.
The drift $b$ and diffusion $\Sigma$ terms, along with $G$, are assumed independent of $i$.
When in discrete state $i$, the domain of the continuous variable $\mathrm{q}(t)$ is $\mathcal{D}_i$.

\paragraph*{Discrete Dynamics} The (single) discrete state $i$ evolves by means of
state-triggered
forced transitions, which occur each time the continuous state $\mathrm{q}$ hits a \emph{guard}.
In this case the guard is a function from $i$ to $\mathbb{R}^n$, sending $i \mapsto \mathcal{N}_i$.
The time at which the transition is triggered is called \emph{stopping time} and it is the
first time instant $t_i \triangleq \inf\{t > t_{i-1} \mid \mathrm{q}(t) \notin \mathcal{D}_i \}$.  Then the
discrete state changes according to the following---in fact, deterministic---rule:
\begin{align*}
\mathbb{P}\big(i + 1 \mid i, \mathrm{q}(t_{i-1}) = \mathrm{q}\big) = 
\begin{cases}
 1  &  \mathrm{q}(t_i) \in \mathcal{N}_i \\
   0  & \text{otherwise}  \enspace.
 \end{cases}
\end{align*}
Note that due to the set of discrete states being finite, and the discrete transition map
being a bijection, there can only be a finite number of discrete transitions and the system
cannot exhibit Zeno behavior.

\paragraph*{Reset Condition} During discrete transitions, continuous states are not reset.
Essentially, the reset map for the continuous states is simply the identity.

The solution of \eqref{eq:SHS-Continuous} over $i=1,\ldots,N$, 
is a collection of Markov processes truncated at (their) exit time, which
can be represented as a \emph{Markov string}.  
A Markov string is a hybrid state jump Markov process \cite{bujorianu-lygeros}. 
Given the existence of solutions for each \ac{sde} \eqref{localSDE1} for fixed $i$ (see \cite{Day}
for details), and due to the finiteness of the set of discrete states, the solutions for the closed
loop stochastic hybrid system are well defined \cite{bujorianu-lygeros}.



\section{Convergence and Stability Properties}
\label{section:stability}

This section presents a proposition that establishes the 
finite-time convergence properties of the closed
loop system to a neighborhood of the origin.

\begin{proposition}
Consider the switched stochastic system \eqref{localSDE1} in an open bounded domain $\mathcal{S} \subset 
\mathbb{R}^n$, where $i \in \mathcal{I}$ is the switching index, and $W(t)$ is a Wiener process. 
Let $Q(\mathrm{q})$ be a $\mathcal{C}^2$, positive definite function in the closure of a bounded domain 
$\mathcal{S}$ which contains the origin. If for every solution $\mathrm{q}(t)$ of the stochastic switched system there exist
\begin{compactenum}[(i)]
\item bounded domains $\mathcal{D}_i$ that satisfy \eqref{eq:conditions}--\eqref{eq:boundary-definition}, and
\item a class-$\mathcal{K}$ function $\eta$ on $\mathcal{S}$ together with
a sequence of points $\{\gamma_i\}^{N}_{i=0} \in \mathcal{S}$ satisfying \eqref{eq:convergence-condition},
\end{compactenum}
then 
the closed-loop switched stochastic system \eqref{localSDE1}--\eqref{eq:optimalcontrol} converges to an 
$\varepsilon$-neighborhood of origin in finite time.
\end{proposition}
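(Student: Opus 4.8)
The plan is to combine two ingredients already established in the paper: (a) the \emph{per-stage} guarantees of the stochastic optimal controller $u_i^\ast$ — namely that it drives the state from $\mathcal{N}_{i-1}$ into $\mathcal{N}_i \subset \overline{\mathcal{B}}_{\gamma_i}(\varepsilon)$ in finite expected time and with probability one, as in \eqref{eq:finiteexittime}--\eqref{eq:probabilitycondition} — and (b) a \emph{discrete} Lyapunov-type decrease of $Q$ along the way-point sequence, encoded in \eqref{eq:convergence-condition}. Ingredient (a) turns each switching interval into an almost-surely completed transition of finite expected duration; ingredient (b) bounds the number of transitions needed to reach the origin. Finitely many transitions, each of finite expected length, then yield finite-time convergence with probability one.

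First I would make per-stage reachability precise. Applying the construction of Section~\ref{section:dayscontroller} on each domain $\mathcal{D}_i$ satisfying \eqref{eq:conditions}--\eqref{eq:boundary-definition}, assumption \eqref{eq:finite-time-condition} gives $\mathbb{E}[\tau_i \mid \mathrm{q}(t_{i-1}) = \mathrm{q}] < \infty$, while the infinite penalty on $\mathcal{M}_i$ yields $\mathbb{P}[\mathrm{q}(t_i) \in \mathcal{N}_i \mid \mathrm{q}(t_{i-1}) = \mathrm{q}] = 1$. Hence, with probability one, the state enters $\mathcal{N}_i \subset \overline{\mathcal{B}}_{\gamma_i}(\varepsilon)$ at the stopping time $t_i$, having never touched $\mathcal{M}_i$. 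Since the discrete-state set $\mathcal{I}$ is finite and the transition map is a bijection, there is no Zeno accumulation, so these stopping times are strictly ordered and each transition completes almost surely.

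Next I would establish the Lyapunov decrease along sample paths. Because $\mathrm{q}(t_i) \in \overline{\mathcal{B}}_{\gamma_i}(\varepsilon)$ and $\mathrm{q}(t_{i-1}) \in \overline{\mathcal{B}}_{\gamma_{i-1}}(\varepsilon)$ with probability one, bounding $Q(\mathrm{q}(t_i))$ above by its maximum on $\overline{\mathcal{B}}_{\gamma_i}(\varepsilon)$, bounding $Q(\mathrm{q}(t_{i-1}))$ below by its minimum on $\overline{\mathcal{B}}_{\gamma_{i-1}}(\varepsilon)$, and invoking \eqref{eq:convergence-condition} gives, for almost every sample path,
\begin{equation*}
Q\big(\mathrm{q}(t_i)\big) - Q\big(\mathrm{q}(t_{i-1})\big) \le -\eta\big(\|\gamma_{i-1}\|\big) \enspace.
\end{equation*}
Telescoping and using positive definiteness of $Q$ (so $Q \ge 0$) yields $\sum_{i\ge 1}\eta(\|\gamma_{i-1}\|) \le Q(\mathrm{q}(t_0)) < \infty$. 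As $\eta$ is class-$\mathcal{K}$, every way-point with $\|\gamma_{i-1}\| \ge \varepsilon$ contributes at least $\eta(\varepsilon) > 0$ to this sum, so the number $K$ of stages before a way-point enters $\overline{\mathcal{B}}_0(\varepsilon)$ is at most $Q(\mathrm{q}(t_0))/\eta(\varepsilon)$, a finite deterministic quantity; with $\gamma_N = 0$ this shows the state reaches the $\varepsilon$-neighborhood of the origin after finitely many switches.

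Finally I would assemble the conclusion: the time to reach that neighborhood is $\sum_{i=1}^{K}\tau_i$, and since $K$ is a finite deterministic bound while each $\tau_i$ has finite expectation (hence is finite almost surely), the total time is finite almost surely, with $\mathbb{E}[\sum_{i=1}^K \tau_i] = \sum_{i=1}^K \mathbb{E}[\tau_i] < \infty$; a finite union over the $K$ probability-one transition events leaves the overall event of probability one. The step I expect to be the main obstacle is reconciling the almost-sure, sample-path nature of the transitions with a \emph{uniform}, sample-path-independent decrease and stage count: one must verify that \eqref{eq:convergence-condition} persists for the way-point sequences \emph{redefined} under the receding horizon recomputation, so that the per-step decrement $\eta(\|\gamma_{i-1}\|)$ and the resulting deterministic bound on $K$ do not degrade along the realized trajectory.
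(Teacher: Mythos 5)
Your proposal is correct, and its second half (the per-stage induction: almost-sure exit through $\mathcal{N}_i$ with finite expected exit time by \eqref{eq:finiteexittime}--\eqref{eq:probabilitycondition}, chained via $\mathcal{N}_k \subset \mathcal{D}_{k+1}$) is the same argument the paper uses. Where you genuinely diverge is in how the \emph{number of stages} is controlled. The paper does not telescope $Q$ along the realized trajectory at all: it invokes the deterministic receding-horizon convergence result of \cite{jad-01} (so that for sufficiently large $T$ the path $\Gamma_T$ meets $\mathcal{B}_0(\varepsilon)$), uses \eqref{condition-of-ali} to confine the deterministic trajectory to a bounded sublevel-type set, and then argues that a bounded set admits only finitely many non-overlapping $\varepsilon$-balls, hence finitely many way-points with $\gamma_N$ at the origin. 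Your route instead extracts a quantitative, sample-path-independent bound $K \le Q(\mathrm{q}(t_0))/\eta(\varepsilon)$ directly from \eqref{eq:convergence-condition}, positive definiteness of $Q$, and the class-$\mathcal{K}$ property of $\eta$. This buys two things the paper's argument leaves implicit: an explicit stage count, and robustness to the receding-horizon redefinition of way-points, since the telescoped decrement $\eta(\|\gamma_{i-1}\|) \ge \eta(\varepsilon)$ applies to whichever stage is actually executed, regardless of recomputation --- so the concern you flag at the end is in fact resolved by your own telescoping, except on one point. What the telescoping alone does \emph{not} supply is that some recomputed way-point eventually sits at the origin (it only forces way-points into $\overline{\mathcal{B}}_0(\varepsilon)$, whence the state is only guaranteed $2\varepsilon$-close); you patch this by assuming $\gamma_N = 0$, which is exactly what the paper derives from \cite{jad-01} rather than assumes, so you should state that appeal explicitly. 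Two cosmetic slips: the overall probability-one event is a finite \emph{intersection}, not union, of the per-stage almost-sure events, and $\mathbb{E}\sbr[0]{\sum_{i=1}^{K}\tau_i}$ should be bounded by $\sum_{i=1}^{K_{\max}}\mathbb{E}[\tau_i]$ since $K$ is random (though dominated by your deterministic bound); neither affects correctness.
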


\begin{proof}
It is known \cite{jad-01} that a receding horizon strategy $u_\mathsf{rh}(t)$ applied on \eqref{standard-rh} yields a trajectory $\hat{\mathrm{q}}^*(t)$ satisfying $\lim_{t\to\infty} \hat{\mathrm{q}}^*(t) \to 0$. Hence, with sufficiently large $T<\infty$, one can find a path $\Gamma_T$ such that $\Gamma_T \cap \mathcal{B}_0(\varepsilon) \neq \emptyset$. Moreover, condition \eqref{condition-of-ali} ensures that for any $\mathrm{q}(t_0) \in \mathcal{S}$, the system will remain within an open bounded set containing the level set of $\mathrm{q}(t_0)$. This means that for a sufficiently large $T$, the path $\Gamma_T$ intersects an 
$\varepsilon$-neighborhood of the origin and remains bounded.
Given that this set is bounded,
one can only cover it with a finite number of non-overlapping balls with radius $\varepsilon > 0$. Hence,  for sufficiently large $T< \infty$, there is a finite number of way-points $N$ that satisfy condition \eqref{eq:convergence-condition} with $\gamma_N$ at the origin. Then, by induction it is shown 
in a straightforward way that the system reaches an $\varepsilon$-neighborhood of the origin in finite time.

%

To this end, set $\mathrm{q}(t_0) = \gamma_0$, construct a path $\Gamma_T$ of finite length according to \eqref{standard-rh}, and select a way-point $\gamma_1$ according to \eqref{eq:convergence-condition}. Given that bounded domain $\mathcal{D}_1$ satisfies \eqref{eq:conditions}--\eqref{eq:boundary-definition},
the application of control law \eqref{eq:optimalcontrol} ensures that
for all $\mathrm{q}(t_0) \in \mathcal{N}_{0}$, $\mathbb{P}\{\mathrm{q}(t_1) \in \mathcal{N}_1 \mid \mathrm{q}(t_0) \} = 1$, that is, the state at time $t_1$ is in $\mathcal{N}_1$ almost surely (see Section \ref{section:dayscontroller} and \cite{Day}). Condition \eqref{eq:finite-time-condition} ensures that the time that this happens is finite.

Now, let us assume that a controller $u_k(\mathrm{q})$ was applied iteratively, and at some time 
$t_k$,  state
$\mathrm{q}(t_k) \in \mathcal{N}_k$.
As $\mathcal{N}_k \subset \mathcal{D}_{k+1}$ and given \eqref{eq:conditions}, there exists
a controller $u_{k+1}(\mathrm{q})$ to steer the state to the next way-point $\gamma_{k+1}$.
Given now that $\mathcal{D}_{k+1}$ also satisfies \eqref{eq:conditions}--\eqref{eq:boundary-definition}, the law \eqref{eq:optimalcontrol} gives 
$\mathbb{P}\{\mathrm{q}(t_{k+1}) \in \mathcal{N}_{k+1} \mid \mathrm{q}(t_{k})\} = 1$
with  $\mathbb{E}[t_{k+1}] < \infty$.  Inductively, since 
 $\mathcal{N}_N := \partial \mathcal{D}_N \cap \overline{\mathcal{B}}_{0}(\varepsilon)$,
 the proof is completed.
\end{proof}

\subsection{Convergence under bounded inputs}

The control law $u_i(\mathrm{q}) = - a(\mathrm{q}) \, G^\intercal(\mathrm{q})\, 
\partial_\mathrm{q} \{-\log g(\mathrm{q})\}$ may require large inputs 
near the boundary $\mathcal{M}_i$, since $g(\mathrm{q}) \to 0$ there.  
This can be problematic from an implementation standpoint.
When these inputs saturate at some $\| u(\mathrm{q}) \|_\mathsf{max}$, 
the control law that is practically implemented is rather approximated smoothly by
\begin{equation*}
\label{boundedinputs}
\check{u}_i(\mathrm{q}) = - \| u(\mathrm{q}) \|_\mathsf{max} \cdot \tanh \!\left( 
a(\mathrm{q}) \, G^\intercal(\mathrm{q})\, \partial_\mathrm{q} V(\mathrm{q},t) \right) \enspace.
\end{equation*}
The problem is that
bounded inputs cannot force exit at $\mathcal{N}_i$ with probability one. The probability of success in
exiting when bounded inputs are applied can be computed \cite{shah-iros2011}, but there there is always a nonzero probability that the system will exit from $\mathcal{M}_i$ instead of $\mathcal{N}_i$. Neither convergence to origin nor constraint satisfaction can be guaranteed almost surely.

To recover convergence under bounded inputs, we propose a recovery strategy that uses repeatedly a 
controller precomputed offline, which steers the system back inside the domain $\mathcal{D}_i$. The receding horizon control can be re-initiated after the state is re-enters $\mathcal{D}_i$. This 
recovery controller is not different from \eqref{eq:optimalcontrol}, and its use is illustrated
in an example in Section \ref{section:examples}.
In the absence of obstacles, and with infinitely large outer domain, the guarantee of convergence can thus be recovered even with bounded inputs.

\section{Examples}
\label{section:examples}
We present two different examples to demonstrate application of our control design. 
In the first example the stochastic optimal control law can be computed explicitly, and 
simulation results are presented to demonstrate its function. 
The effect of input saturation is also investigated.   
The second example involves a nonlinear system, where the stochastic optimal control laws can not be 
computed explicitly.  There, we show how the application of the Feynman-Kac formula offers numerical
controller designs, and we present the results through representative plots.
 
\subsection{The Stochastic Single Integrator}
\label{section:linearexample}
\paragraph*{Problem formulation} 
Consider the system \eqref{localSDE1} with the drift term $b(\mathrm{q}) \equiv 0$ and $G(\mathrm{q})$ is identity. 
This simple drift-less system can be described as a two-dimensional single integrator with stochastic uncertainty as
\begin{equation}
\label{eq:linearexample}
\dif \mathrm{q}(t) = u_i(\mathrm{q}(t))\dif t + \Sigma(\mathrm{q}(t))\dif W(t); \enspace \mathrm{q}(0) = \mathrm{q}_0
\end{equation}
where $\mathrm{q} = [x \ y]^\intercal$ is the state and $W(t)$ is a 2-dimensional Wiener process.  The objective is to find control inputs $u_i(\mathrm{q}(t))$ to drive the system to origin,  using minimal inputs, 
avoiding obstacles, and moving along paths of minimal length to its destination. Here the system's workspace is  a ball of radius $\rho_0$, containing $M$ spherical obstacles with radii $\rho_j$ and centers $\mathrm{q}_j$, $j=1,2,\dots,M$. 
\paragraph*{Deterministic Path Planning} 
The first step is to find a reference trajectory for \eqref{eq:linearexample} ignoring noise. The nominal dynamics is just $\dot{\hat{\mathrm{q}}} = u(\hat{\mathrm{q}}(t))$. 
We use the approach of \cite{Tanner-10} (other methods are also possible) 
to find a continuous trajectory minimizing a finite-horizon cost
\begin{equation*}
J(\hat{\mathrm{q}},u) = \int_{0}^{T} \{c_1 \|u(s)\|^2 + c_2 \| \hat{\mathrm{q}}(s)\|^2 \} \dif s + Q(\hat{\mathrm{q}}(T))
\end{equation*}
where $T$ is the prediction horizon and $c_1$ and $c_2$ are arbitrary positive constants. The terminal cost $Q(\hat{\mathrm{q}}(T))$ is selected as a \emph{navigation function} \cite{Kodi} defined as 
\begin{equation}
Q(\mathrm{q}) = \left( \frac{\|\mathrm{q}\|^{2k}}{\|\mathrm{q}\|^{2k} + \beta(\mathrm{q})}\right)^\frac{1}{k}
\label{NavFun}
\end{equation}
where $k \in \mathbb{N}^+$ is a sufficiently large positive integer. In \eqref{NavFun}, the function
$\beta:\mathcal{P} \rightarrow \left[0,\infty\right)$ encodes the location and size of obstacles and is expressed as
\[
\beta \triangleq \prod^{M}_{j=0}\beta_j
\] 
with
$\beta_0  \triangleq \rho^2_0 - \left\|\mathrm{q}\right\|^2$ and  
$\beta_j \triangleq \left\|\mathrm{q} - \mathrm{q}_j\right\|^2 - \rho^2_j $, for  $j = 1, \dots, M$.


Assume that the outcome of this procedure is an obstacle-free continuous state trajectory $\hat{\mathrm{q}}^*(t) \in \mathcal{P}$, and the resulting path is
$
\Gamma \triangleq \{ \gamma \in \mathbb{R}^2 \mid \exists t \in \mathbb{R} ; \gamma = \hat{\mathrm{q}}^*(t)\} \ .
$
\paragraph*{Way-point Generation} 
There exist control way-points $\{\gamma_i\}_{i=0}^{N} \in \Gamma$, such that 
$\gamma_0 = \hat{\mathrm{q}}(t_0)$, and $\gamma_N = \hat{\mathrm{q}}^*(T)$.  Define the sets $\overline{\mathcal{B}}_{\gamma_i}({\varepsilon}) \triangleq \{\mathrm{q} \in \mathcal{P} : \| \mathrm{q} - \gamma_i\| \leq \varepsilon\}$ 
and denote their boundary $\partial \mathcal{B}_{\gamma_i}({\varepsilon})$.
The waypoints we select are chosen to satisfy the following constraint:
\begin{align}
\label{condition1}
&\max_{a \in \overline{\mathcal{B}}_{\gamma_{i-1}}({\varepsilon})}\{Q(a)\} - \min_{b \in \overline{\mathcal{B}}_{\gamma_i}({\varepsilon})}\{Q(b)\} \leq -\eta(\|\gamma_{i-1}\|) \\
\label{condition2}
&\|\gamma_{i-1} - \gamma_i \| > 2\varepsilon \\
\label{condition3}
&R_i < \min\{ \| \gamma_i - z \|, z \in \mathcal{O}\}, \ R_i - 2\varepsilon > \|\gamma_{i-1} - \gamma_i \| 
\end{align}
where $\varepsilon$ and $R_i$ are positive constants. The above constraints also help determine the radius $R_i$, which is the outer radius of the domain of the continuous state $\mathcal{D}_i$. There is no unique solution for $R_i$ and one can specify an upper and lower bounds on $R_i$. 

The local domains $\mathcal{D}_i$ are now defined as
\begin{align*}
& \mathcal{D}_i \triangleq \mathcal{B}_{\gamma_i}({R_i}) \setminus \overline{\mathcal{B}}_{\gamma_i}({\varepsilon})\\
&\partial \mathcal{D}_i \triangleq \partial \mathcal{B}_{\gamma_i}({\varepsilon}) \cup \partial \mathcal{B}_{\gamma_i}({R_i}) \ .
\end{align*}
where $\mathcal{N}_i = \partial \mathcal{B}_{\gamma_i}({\varepsilon}) $ and $\mathcal{M}_i = \partial \mathcal{B}_{\gamma_i}({R_i})$. 
Conditions \eqref{condition1}--\eqref{condition2} imply that
\[\mathcal{B}_{\gamma_i}({R_i}) \cap \mathcal{O} = \emptyset; \enspace \mathcal{B}_{\gamma_{i-1}}({\varepsilon)} \subset \mathcal{D}_i \ ,  \forall i \ . \]
\paragraph*{Stochastic optimal controller} 
The control input $u_i(\mathrm{q}(t_i))$ for \eqref{eq:linearexample} is constructed as
shown in Section \ref{section:dayscontroller}.  It achieves 
\begin{equation*}
V(\mathrm{q}) = \min \mathbb{E} \left[ \frac{1}{2} \int_{t_{i-1}}^{t_{i}} u(\mathrm{q}(s))^\intercal u(\mathrm{q}(s))\dif s + \Phi(\mathrm{q}(t_{i})) \:\Big|\: \mathrm{q}(t_{i-1}) =\mathrm{q}  \right]
\end{equation*}
where 
\begin{equation*} 
\Phi=+\infty \cdot \mathcal{X}_{\mathcal{M}_i}; \ 
\mathcal{X}_{\mathcal{M}_i}  = 
\begin{cases} 0 & \text{on } {\mathcal{N}_i}\\
1 &\text{on } {\mathcal{M}_i} .
\end{cases} 
\end{equation*}
The optimal control law is  
\[u^*(\mathrm{q}) =  - a(\mathrm{q})\cdot\partial_\mathrm{q} V(\mathrm{q})\]
where $a(\mathrm{q}) = \Sigma(\mathrm{q})\Sigma^\intercal(\mathrm{q})$,
$V(\mathrm{q}) = - \log g(\mathrm{q})$, 
and $g(\mathrm{q})$ is the solution of the \ac{pde}
\begin{align*}
\frac{1}{2}\left(\frac{\partial^2}{\partial x^2}+ \frac{\partial^2}{\partial y^2}\right) g &= 0& & 
\text{in } \mathcal{D}_i\\
g &= 0 & &\text{on } \mathcal{M}_i \\ g  &= 1& &\text{on } \mathcal{N}_i
\end{align*}
Function $g(\mathrm{q})$ has an analytic expression:
\begin{equation*}
g(\mathrm{q}) = \frac{R_i- {\| \mathrm{q} -\gamma_i \|}}{R_i - \varepsilon} \enspace,
\end{equation*}
which suggests a value function
\[V(\mathrm{q}) = - \log \frac{R_i - {\| \mathrm{q} -\gamma_i \|}}{R_i - \varepsilon} \]
and a control law of the form
\begin{equation}
\label{actualcontrol}
u_i(\mathrm{q}) = -a(\mathrm{q})\cdot 
\frac{\mathrm{q} -\gamma_i }{{\big(R_i -\| \mathrm{q} -\gamma_i \|\big)}
{\| \mathrm{q} -\gamma_i \|}} \enspace.
\end{equation}
Control input $u_i(\mathrm{q})$ switches to $u_{i+1}(\mathrm{q})$ upon hitting the boundary $\mathcal{N}_i$ for $i = 1, 2, \dots$ until the state is in $\varepsilon$-neighborhood of the goal.

\paragraph*{Problem instantiation and simulation results}
Simulations were performed (taking $\mathrm{q} \in \mathbb{R}^2$) with the overall bounded domain being $\mathcal{S} = \{\mathrm{q} \in \mathbb{R}^2 \mid \| \mathrm{q} \| < 10 \}$.  
The initial condition is $\mathrm{q}_0 = [x , y]^\intercal = [-3.0 , -3.0]^\intercal$. 
The goal is to drive the system to the origin. The workspace contains two obstacles of radius $0.2$ at coordinates $[-3.0 , -1.0]^\intercal$ and $[-2.0 , -2.0]^\intercal$. Matrix 
$\Sigma(\mathrm{q})$ is the $2\times2$ identity, and $R_i$ is chosen to 
satisfy $\|\gamma_{i-1} - {\gamma_i} \| < R_i -  2\varepsilon$ and $\min\{ \| \gamma_i - z \|, z \in \mathcal{O}\} > R_i$  with  $\varepsilon = 0.1$. A navigation function $Q(\mathrm{q})$ is 
constructed on $\mathbb{R}^2$ and  a trajectory for $\dot{\hat{\mathrm{q}}} = \hat{u}(\hat{\mathrm{q}}(t))$ is generated based on \cite{Tanner-10}. The simulation of the complete algorithm is shown in the Fig. \ref{fig:SimSMPC-unboundedinputs}. The navigation function is depicted in the form of a contour plot, while the discrete way-points are center of filled (red) circles. The boundaries $\mathcal{M}_i$ are chosen based on \eqref{condition1}-- \eqref{condition3} and are marked in the figure by dotted black circles.

\begin{figure}[!h]
\centering
\subfigure[Stochastic Path]{
          \includegraphics[width=0.35\textwidth]{path.pdf}
	\label{fig:SimSMPC}
}
\subfigure[Inputs]{
             \includegraphics[width=0.35\textwidth]{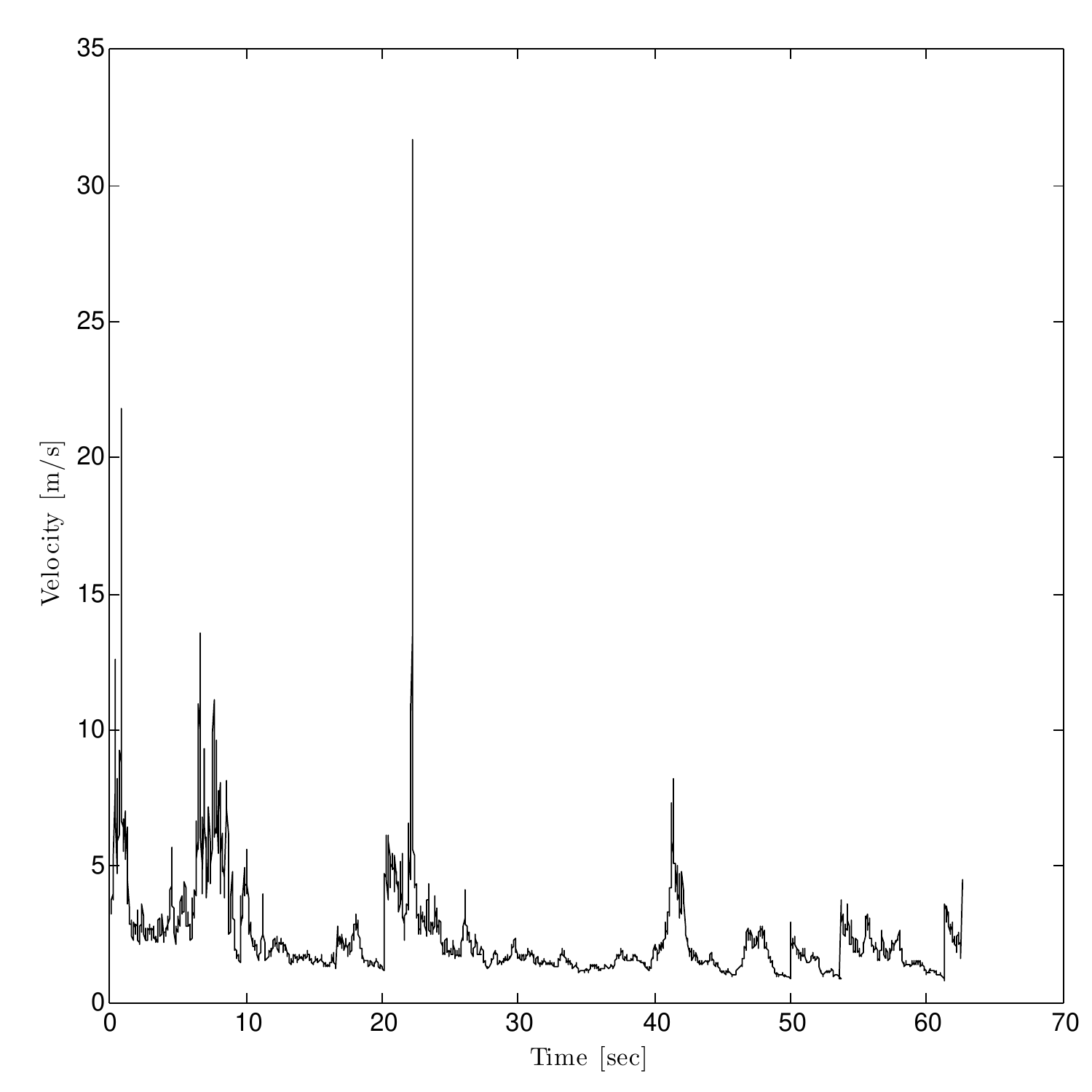}
	\label{fig:Input}
}
\caption[]{Simulation of a stochastic receding horizon control for a stochastic single integrator moving in a two obstacle environment. The simulation was generated using Euler-Maruyama method implemented in $\text{MATLAB}^{\copyright}$ Econometrics toolbox. \subref{fig:SimSMPC} The blue trajectory shows the actual stochastic path taken by the system. The initial condition of the system is marked with a black square. The black dashed circles represents the boundary $\mathcal{M}_i$ while red disks represent the region around way-points $\gamma_i$ with its boundary $\mathcal{N}_i$ and the blue circle is the boundary around the final goal. and \subref{fig:Input} Norm of the control inputs for the entire simulation with unbounded inputs.}
\label{fig:SimSMPC-unboundedinputs}
\end{figure}

\paragraph*{The effect of input saturation}
The following controller is a saturated version of \eqref{actualcontrol}:
\begin{equation}
\label{eq:saturatedcontrol}
\check{u}_i(\mathrm{q}) = - |u(\mathrm{q})|_{\mathsf{max}} \cdot \tanh \left(\frac{\mathrm{q} -\gamma_i }{{\big(R_i -\| \mathrm{q} -\gamma_i \|\big)}{\| \mathrm{q} -\gamma_i \|}} \right) \enspace.
\end{equation}
Figure \ref{fig:SimSMPC-boundedinputs} shows a sample path for the bounded input case, and
quantifies the norm of the inputs used.
\begin{figure}[thpb]
\centering
\subfigure[Stochastic Path]{
	\centering
	 \includegraphics[width=0.35\textwidth]{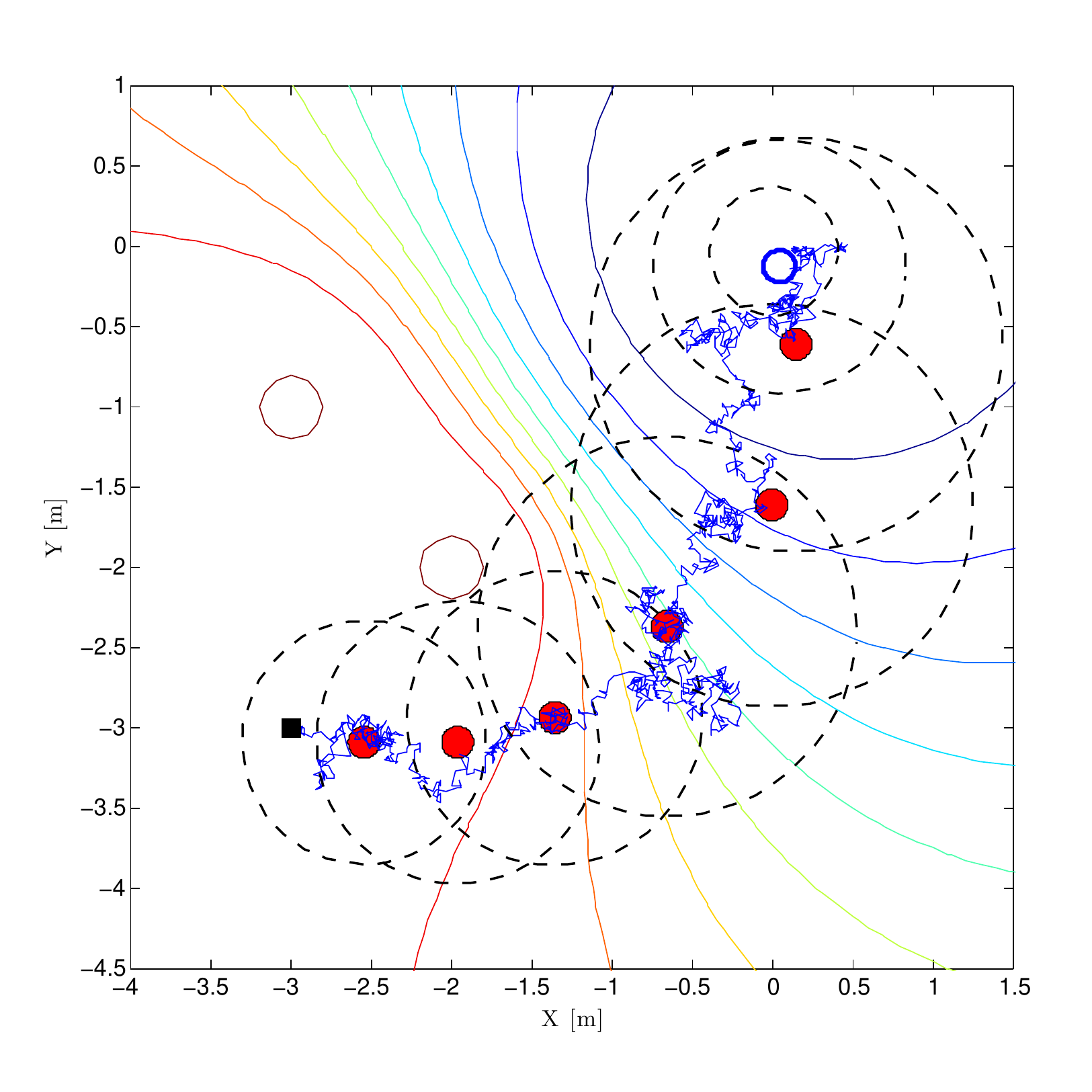}
	\label{fig:SimSMPC-bounded}
}
\subfigure[Inputs]{
	\centering
	 \includegraphics[width=0.35\textwidth]{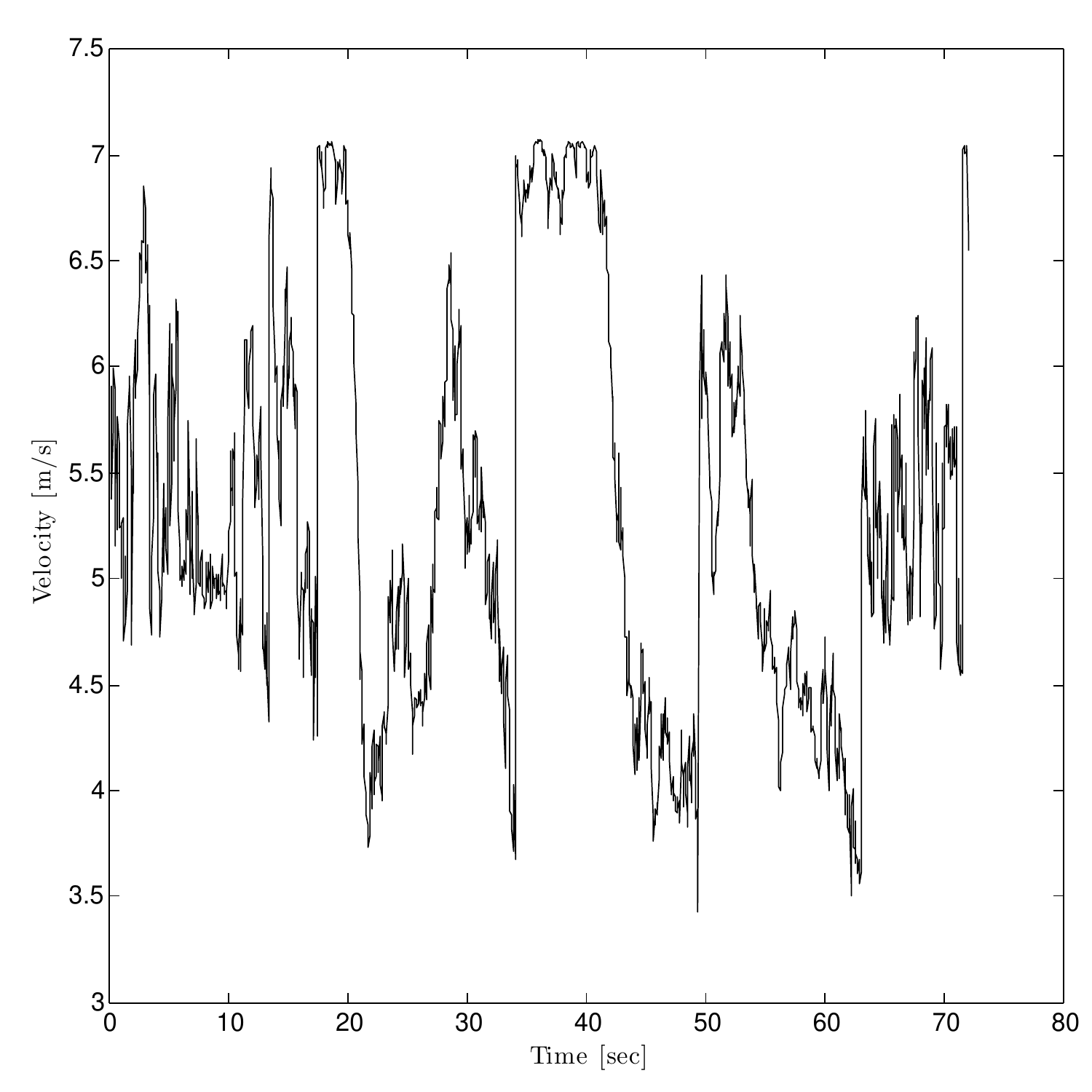}
	\label{fig:Saturated-Input}
}
\caption[Optional caption for list of figures]{Simulation of a stochastic receding horizon control for a stochastic single integrator moving in a two obstacle environment with bounded inputs. The system \eqref{eq:linearexample} was simulated with bounded inputs \eqref{eq:saturatedcontrol} and $|u(\mathrm{q})|_{\mathsf{max}} = 5$.  \subref{fig:SimSMPC-bounded} The blue trajectory shows the actual stochastic path taken by the system. The initial condition of the system was $[-3 , -3]^\intercal$ represented by a square. The black dashed circles represent the boundary $\mathcal{M}_i$ while red disks represent the region around way-points $\gamma_i$ with its boundary $\mathcal{N}_i$ and the blue circle is the boundary around the final goal. and \subref{fig:Saturated-Input} Norm of the saturated control inputs. Each component of the input was saturated at  $|u(\mathrm{q})|_{\mathsf{max}} = 5$ using \emph{tanh} function.}
\label{fig:SimSMPC-boundedinputs}
\end{figure}
 
 \begin{figure}[thpb]
      \centering
      \includegraphics[width=0.4\textwidth]{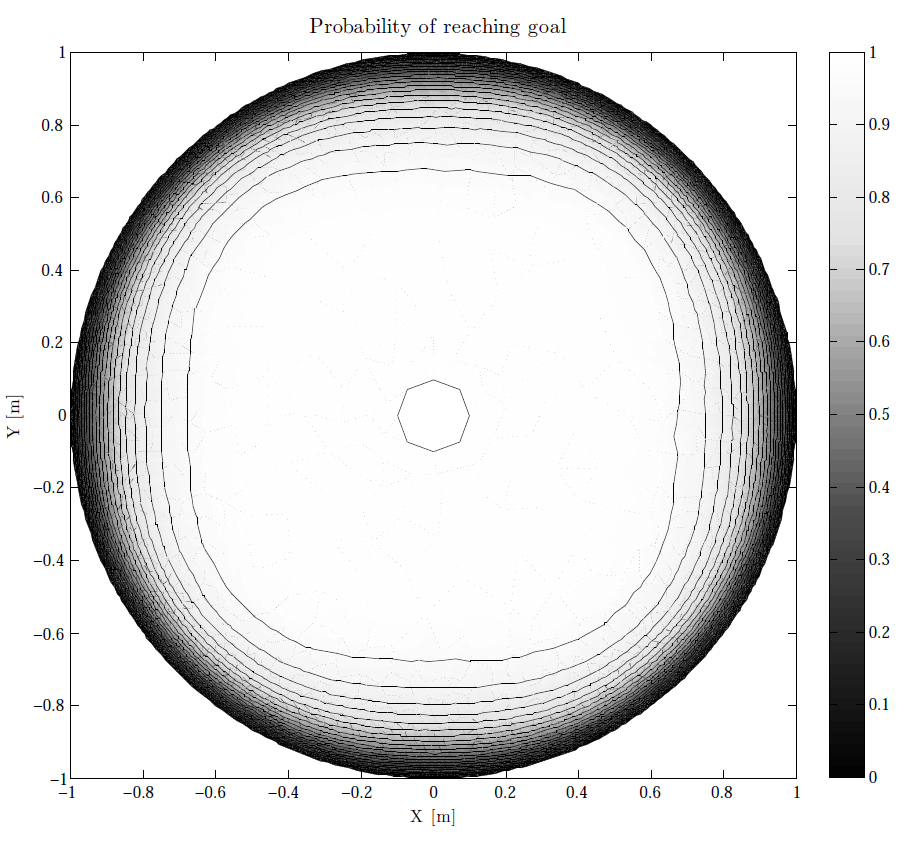}
      \caption{The probability of first hitting the goal boundary for the system \eqref{eq:examplesystem} using bounded input \eqref{eq:saturatedcontrol} with $|u(\mathrm{q})|_{\mathsf{max}} = 5$. The probability of reaching the desired boundary for each local controller can be computed according to \cite{shah-iros2011}.}
      \label{probtanh}
   \end{figure}  
As discussed earlier, bounded inputs \eqref{eq:saturatedcontrol} will not result in success with
 probability one (i.e.\ the probability of first hitting $\partial \mathcal{B}_{\gamma_i}({\varepsilon})$) and the probability of success for each local controller can be computed according to \cite{shah-iros2011}. Figure \ref{probtanh} represents the probability of hitting the goal boundary $\partial \mathcal{B}_{\gamma_i}({\varepsilon})$, before exiting the domain elsewhere for any given initial condition. It can be seen that there is always a nonzero probability that the system exits from $\partial \mathcal{B}_{\gamma_i}({R_i})$ instead of $\partial \mathcal{B}_{\gamma_i}({\varepsilon})$ under bounded inputs, and this probability becomes higher for initial conditions closer to $\partial \mathcal{B}_{\gamma_i}({R_i})$. 

To recover convergence under bounded inputs, we implement the recovery strategy. 
The implementation is shown in Fig. \ref{recovery-strategy}.    
We observe that the probability of convergence with recovery strategy can be one in absence of obstacles and sufficiently (infinitely) large outer boundary. In the presence of obstacles, the computation of the probability of convergence can only be approximated by a numerical estimation for finite way-points.\footnote{The probability of convergence can be shown to be equal to one if we consider the state constraints to be reflective boundary; this is a topic for a different paper.}
\begin{figure}[thpb]
      \centering
      \includegraphics[width=0.4\textwidth]{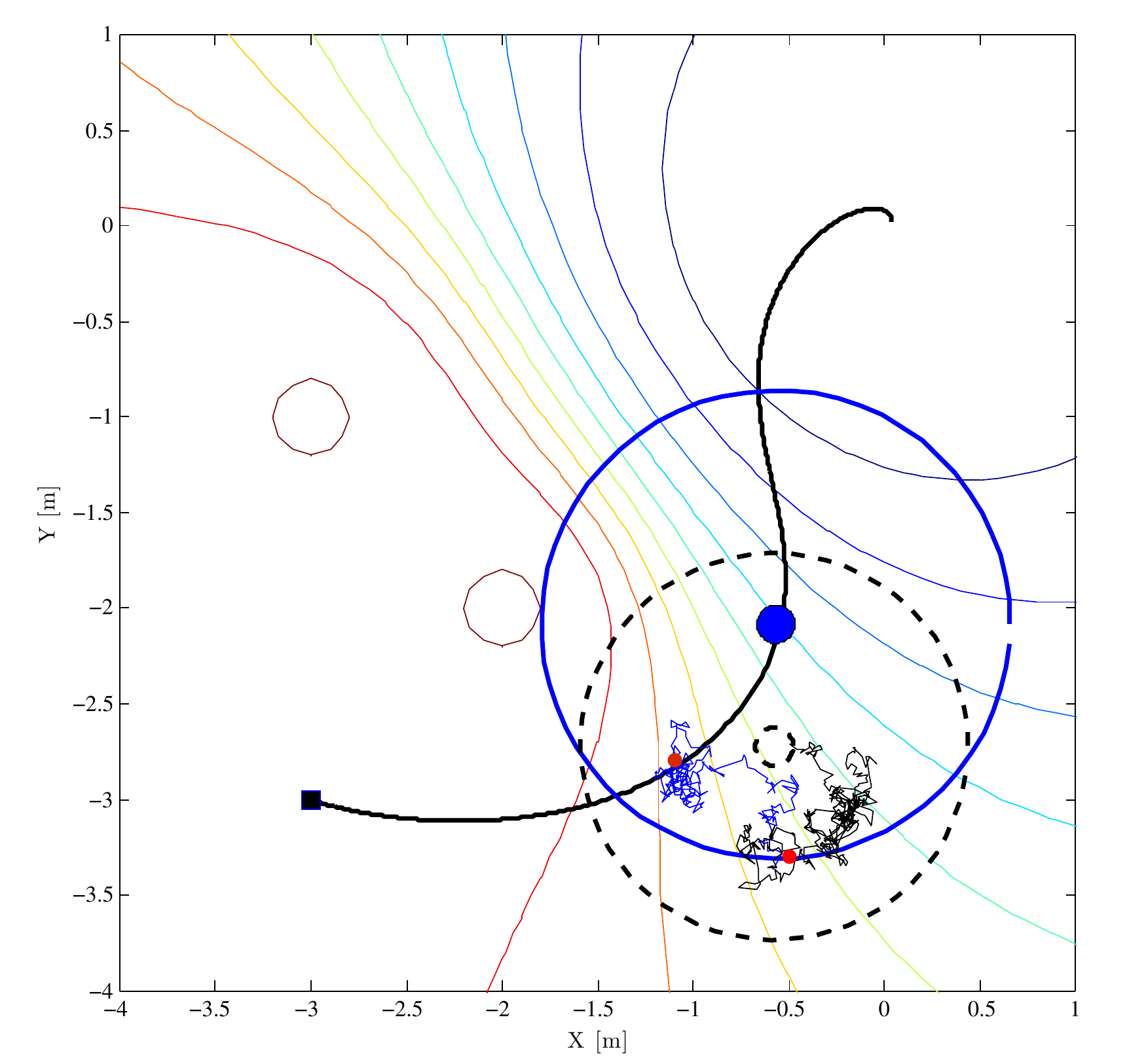}
      \caption{An example of the recovery strategy. The blue trajectory is evolution under a controller $\check{u}_i(\mathrm{q})$ which fails and the system exits at $\partial \mathcal{B}_{\gamma_i}({R_i})$. The dotted circles form domain of the recovery controller and the system is driven back inside the domain $\mathcal{D}_{i}$.}
      \label{recovery-strategy}
   \end{figure} 

\subsection{A Nonlinear System}

Finding a solution to the \ac{pde} \eqref{eq:Dirichlet} is central to the proposed control design. 
In Section~\ref{section:linearexample}, such a solution can be obtained explicitly, but with 
\eqref{eq:Dirichlet} having varying coefficients, this is not
true in general.  In this section we demonstrate a solution approach that is based on the 
Feynman-Kac formula.

\paragraph*{Problem formulation} 
Consider a mobile robot with three omni-directional wheels (Fig.~\ref{fig:omnirobot}).
In Fig.~\ref{fig:omnirobot}, $x$, $y$ mark the position, with respect an inertial $X$--$Y$ 
frame, of the local, body-fixed frame $X_m$--$Y_m$.   The orientation of the local frame
with respect to $X$--$Y$ is given by angle $\theta$.  The dynamical system modeling the
robot has as state the vector $\mathrm{q}  = [x, y, \theta]^\intercal$. 
The input to the system is a vector $u = [U_1, U_2, U_3]^\intercal$ of the linear velocities
of the three wheels, denoted $U_1$, $U_2$, $U_3$, respectively.  
Stochastic noise affects all three coordinates $x$, $y$ and $\theta$. 
The equations of motion for such a system can be represented by the following \ac{sde}
\begin{multline}
\label{eq:omnirobot-examplesystem}
\small \left[ \begin{array}{c}\dot x \\ \dot y \\ \dot \theta \end{array} \right] = 
  \left[ \begin{array}{ccc} \frac{2}{3}\cos(\theta + \delta) & -\frac{2}{3}\cos(\theta - \delta) & \frac{2}{3}\sin(\theta)\\ \frac{2}{3}\sin(\theta + \delta) & -\frac{2}{3}\sin(\theta - \delta) & -\frac{2}{3}\cos(\theta) \\ \frac{1}{3L} & \frac{1}{3L}& \frac{1}{3L} \end{array} \right]
 \left[ \begin{array}{c}{U}_1 \\ {U}_2 \\{U}_3 \end{array} \right] \\  + 
\small \left[ \begin{array}{ccc}0.2 & 0 & 0 \\ 0 & 0.2 & 0 \\ 0 & 0 & 0.2 \end{array} \right]
 \left[ \begin{array}{c}dW_1 \\ dW_2 \\dW_3 \end{array} \right]
\end{multline}
\begin{figure}[thpb]
      \centering
      \includegraphics[width=0.4\textwidth]{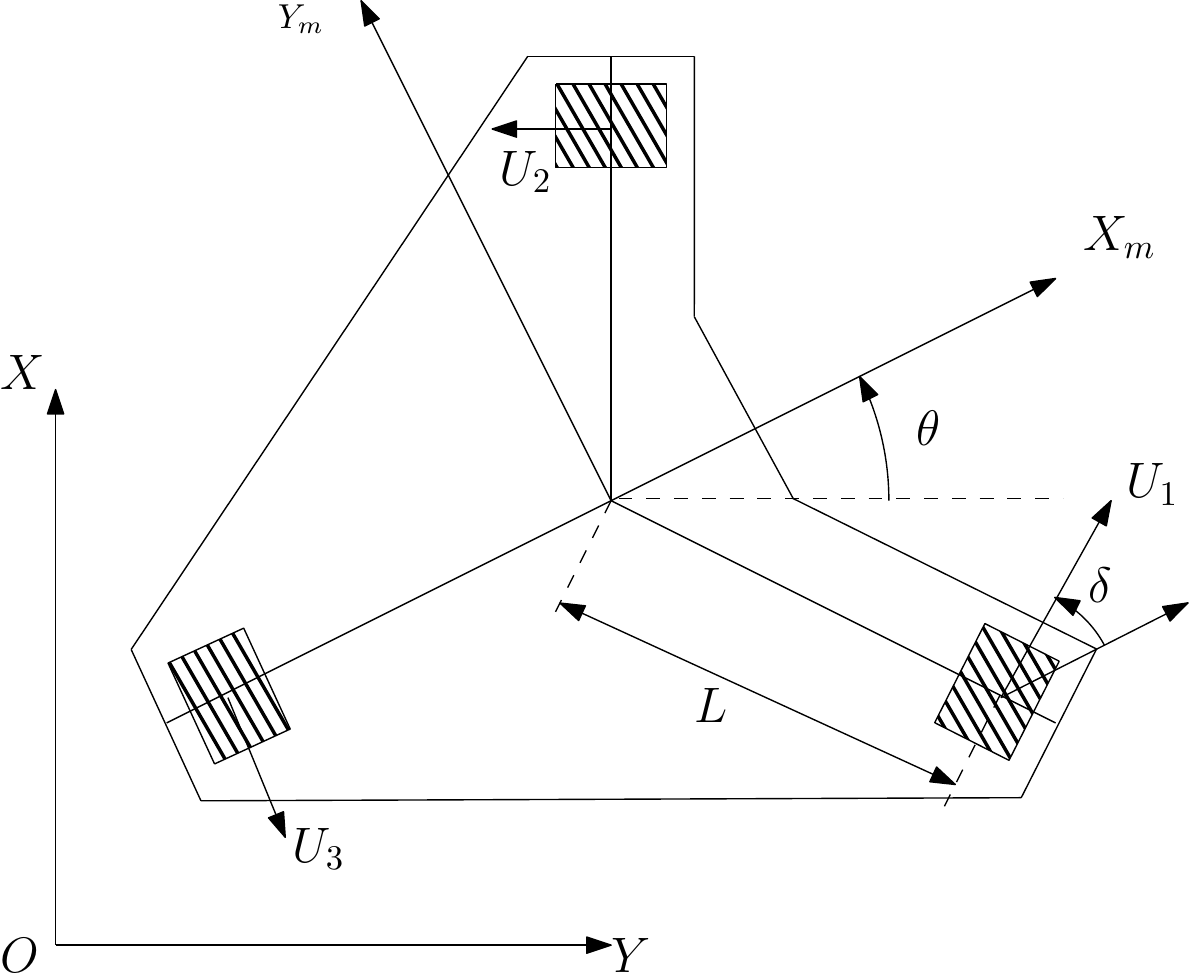}
      \caption{A graphical representation of an omni-directional robot, showing the variables involved in
      the dynamical model \eqref{eq:omnirobot-examplesystem}.}
      \label{fig:omnirobot}
   \end{figure} 
\begin{remark}
\label{remark:r4}
Formally, $\mathrm{q}  = [x, y, \theta]^\intercal$ belongs in the two-dimensional special Euclidean
group $\mathrm{SE}(2)$; it can, however, be embedded in $\mathbb{R}^4$  \cite{Lav06}, 
where the usual metrics can be used.   Here, the metric 
$\|[x_1, y_1, \theta_1]^\intercal\| = \sqrt{x_1^2 + y_1^2 + (\cos \theta_1- 1)^2 + (\sin \theta_1)^2}$
(see \cite{Lav06}) is used.
%
%
\end{remark}

The goal is to a find control law $U_i\big(\mathrm{q}(t)\big)$ to drive 
\eqref{eq:omnirobot-examplesystem} to the origin $x=y=\theta=0$,
using inputs of minimal magnitude, following paths of minimal length, and avoiding obstacles 
along the way.  The robot's workspace is a torus, containing a finite number $M$ of 
torus-shaped obstacles at locations $\mathrm{q}_j$, $j=1,2,\dots,M$.
The robot's outer workspace boundary, and those of the
obstacles for $i=1,\dots,M$ are is defined as 
\begin{subequations}
\label{modified boundaries}
\begin{align}
\partial \mathcal{S} &\triangleq \{(x,y,\theta) \in \mathbb{R}^2\times\mathbb{S} \mid x^2 + y^2 =  \rho_0^2, \forall \theta \in \mathbb{S}\} \\
\partial \mathcal{O}_i &\triangleq \{(x,y,\theta) \in \mathbb{R}^2\times\mathbb{S} \mid (x-x_i)^2 + (y - y_i)^2 =  \rho_i^2, \forall \theta \in \mathbb{S}\} \ .
\end{align}
\end{subequations}

Matching \eqref{eq:omnirobot-examplesystem} to \eqref{localSDE1} we identify the different terms
as follows:
\begin{align*}
b(\mathrm{q}) & = [0 \  0 \  0]^T, &  \\
G(\mathrm{q}) & =    \begin{bmatrix} \frac{2}{3}\cos(\theta + \delta) & -\frac{2}{3}\cos(\theta - \delta) & \frac{2}{3}\sin(\theta)\\ \frac{2}{3}\sin(\theta + \delta) &  -\frac{2}{3}\sin(\theta - \delta) & -\frac{2}{3}\cos(\theta) \\ \frac{1}{3L} & \frac{1}{3L}& \frac{1}{3L} \end{bmatrix}, 
& \\
\Sigma(\mathrm{q}) &= \begin{bmatrix} 
0.2 & 0 & 0 \\ 0 & 0.2 & 0 \\ 0 & 0 & 0.2 \end{bmatrix}. 
\end{align*}

\paragraph*{Deterministic Path Planning} 
Using the metric introduces in Remark~\ref{remark:r4}, and the definition of obstacle and 
outer boundary in \eqref{modified boundaries}, we apply the path planning approach of 
Section \ref{section:linearexample}, selecting a fixed $R$ satisfying 
$\inf_{z \in \mathcal{O}, t>0} \| \hat{\mathrm{q}}(t) - z \| > R > 2\varepsilon> 0$.

Let us denote $\hat{\mathrm{q}}^*(t)$ the obstacle-free continuous state trajectory found 
using, say \cite{Tanner-10}.  Then the path is expressed directly as
$
\Gamma_T \triangleq \{ \gamma \in \mathbb{R}^2\times\mathbb{S} \mid \exists t \in \mathbb{R} ; \gamma = \hat{\mathrm{q}}^*(t)\} 
$.

\paragraph*{Way-point Generation} 
Here we will select a sequence $\{\gamma_i\}_{i=0}^{N} \in \Gamma_T$, of
waypoints. The objective of stochastic controller for each discrete state $i$ is to 
make \eqref{eq:omnirobot-examplesystem} converge $\varepsilon>0$ close to 
way-point $\gamma_i$.

To this end, define a set $\overline{\mathcal{B}}_{\gamma_i}({\varepsilon}) \triangleq \{\mathrm{q} \in 
\mathcal{P} : \| \mathrm{q} - \gamma_i\| \leq \varepsilon \}$ and denote its boundary 
$\partial \mathcal{B}_{\gamma_i}({\varepsilon})$.
Then define domains $\mathcal{D}_i = \{(x,y,\theta) \in \mathbb{R}^2\times\mathbb{S} \mid x^2 + y^2 < R,
\enVert{(x,y,\theta)} > \varepsilon, \forall \theta \in \mathbb{S}\}$, and select an arbitrary set of $N$
points from $\Gamma_T$, such that 
$\gamma_0 = \hat{\mathrm{q}}(t_0)$, $\gamma_N = \hat{\mathrm{q}}^*(T)$, and for $i=1,\ldots,N-1$,
\begin{align}
\label{condition1}
&\max_{a \in \overline{\mathcal{B}}_{\varepsilon_i}}\{Q(a)\} - \min_{b \in \overline{\mathcal{B}}_{\varepsilon_{i-1}}}\{Q(b)\} \leq -\eta(\|\gamma_{i-1}\|) \\
\label{condition2}
&R-2\varepsilon > \|\gamma_{i-1} - \gamma_i \| > 2\varepsilon
\end{align}
The boundaries $\mathcal{N}_i$
and  $\mathcal{M}_i$ are defined as $\mathcal{N}_i = \partial \mathcal{D}_i \cap \partial \mathcal{B}_{\gamma_i}({\varepsilon}) $ and $\mathcal{M}_i = \partial \mathcal{D}_i \setminus \mathcal{N}_i$, respectively
for all $i=1,\ldots,N$. 

\paragraph*{Stochastic optimal controller} 
The \ac{pde} \eqref{eq:Dirichlet} is now written as
\begin{align}\label{dirichlet-omnirobot}
\mathcal{L} g &= 0 &&\text{in } \mathcal{D}_i\\
\nonumber g &= \exp\!\big(-\Phi(\xi(\tau_{\mathcal{N}_i}))\big) & &
\text{on } \mathcal{M}_i \cup \mathcal{N}_i =  \partial D_i
\end{align}
where $\mathcal{L}$ is an operator on functions defined as 
$
\mathcal{L}(\cdot) = 
\frac{1}{2} \mathrm{tr}\left\{ \partial_{\mathrm{q}\mathrm{q}}(\:\cdot\:) \,G(\mathrm{q})\,
\Sigma(\mathrm{q})\,\Sigma^\intercal(\mathrm{q})\,G^\intercal(\mathrm{q})\right\}
$.

Equation \eqref{dirichlet-omnirobot} does not admit analytic solutions.   
Common applicable numerical methods such as finite differences and finite elements have
difficulty producing acceptable solutions for instances of problems with dimension larger
than three and complex boundary conditions. 
Alternatively, the Feynman-Kac's formula (see Section \ref{section:dayscontroller}), relates 
the \ac{pde} to an \ac{sde}:
\begin{equation}
\tiny{
\label{eq:feynman-system}
\begin{bmatrix} 
\dif \xi_1 \\ \dif \xi_2 \\ \dif \xi_3 
\end{bmatrix}  = 
 \begin{bmatrix} 
\textstyle{\frac{2}{3}}\cos({\xi_3} + \delta) & -\textstyle{\frac{2}{3}}\cos({\xi_3} - \delta) & 
\textstyle{\frac{2}{3}}\sin{\xi_3}\\ \textstyle{\frac{2}{3}}\sin({\xi_3} + \delta) & 
-\textstyle{\frac{2}{3}}\sin({\xi_3} - \delta) & -\textstyle{\frac{2}{3}}\cos{\xi_3} \\ 
\textstyle{\frac{1}{3L}} & \textstyle{\frac{1}{3L}} & \textstyle{\frac{1}{3L}} 
\end{bmatrix} 
\:
\begin{bmatrix}
0.2 & 0 & 0 \\ 0 & 0.2 & 0 \\ 0 & 0 & 0.2 
\end{bmatrix} 
\begin{bmatrix} 
\dif W_1 \\ \dif W_2 \\ \dif W_3 
\end{bmatrix} }
\end{equation}
which is essentially the unforced system \eqref{eq:omnirobot-examplesystem}.
Then, we know that the function $g(\mathrm{q})$ satisfies
\begin{equation}
\label{probfunction}
g(\mathrm{q}) = \mathbb{P} \big[ \xi_t(t_i) \in \mathcal{N}_{i} | \xi_t = \mathrm{q} \big]
\end{equation}
where $t_i$ is the first exit time from the domain $\mathcal{D}_i$.

\paragraph*{Problem instantiation and simulation results}
The probability in \eqref{probfunction} can be estimated numerically\footnote{The source code to compute function $g(\mathrm{q})$ is available at http://code.google.com/p/stochastic-receding-horizon-control/} by 
simulating sufficiently many sample paths 
of \eqref{eq:feynman-system} with different initial conditions $\mathrm{q}$.  
We produce these sample paths using the Euler-Maruyama method \cite{SDEsimulation}. 
Using the same method, we also obtain sample paths for \eqref{eq:omnirobot-examplesystem}. 
A $41\times41\times41$ grid is imposed on the state space, and treating each node as an initial 
condition, we produce  500 sample paths and estimate \eqref{probfunction}.
With the estimate of \eqref{probfunction}, the control law is computed numerically as
\begin{equation*}
u^*_i(\mathrm{q}) = - \Sigma(\mathrm{q}) \, \Sigma^\intercal(\mathrm{q})\, G^\intercal(\mathrm{q}) \, \partial_{\mathrm{q}}\!\big(- \log(g(\mathrm{q}))\big)\  .
\end{equation*}
Figure~\ref{fig:gx} presents two numerical approximations of $g(\mathrm{q})$ in the form of 2D colormaps 
with robot orientation set at $0$ and $\frac{\pi}{2}$ radians, respectively.  Equipped with such a 
map, a numerical gradient can be used to calculate the control input. 
Figure \ref{fig:omni_path} shows a single sample path for the closed loop version of 
\eqref{eq:omnirobot-examplesystem}.  The time history of individual states $x$, $y$ and $\theta$
are shown in Figs.~\ref{fig:omni_x_traj}--\ref{fig:omni_theta_traj},
indicating the convergence to an $\varepsilon$ neighborhood of the origin. 
Figure~\ref{fig:omni_inputs} plots the norm of the control inputs used.  Numerical data
confirmed that the probability that the closed loop system hits every desired goal boundary 
$\partial \mathcal{N}_i$ is one.
\begin{figure}[thpb]
\centering
\subfigure[Solution $g(\mathrm{q})$ for robot orientation $\theta = 0$]{
	\centering
	 \includegraphics[width=0.35\textwidth]{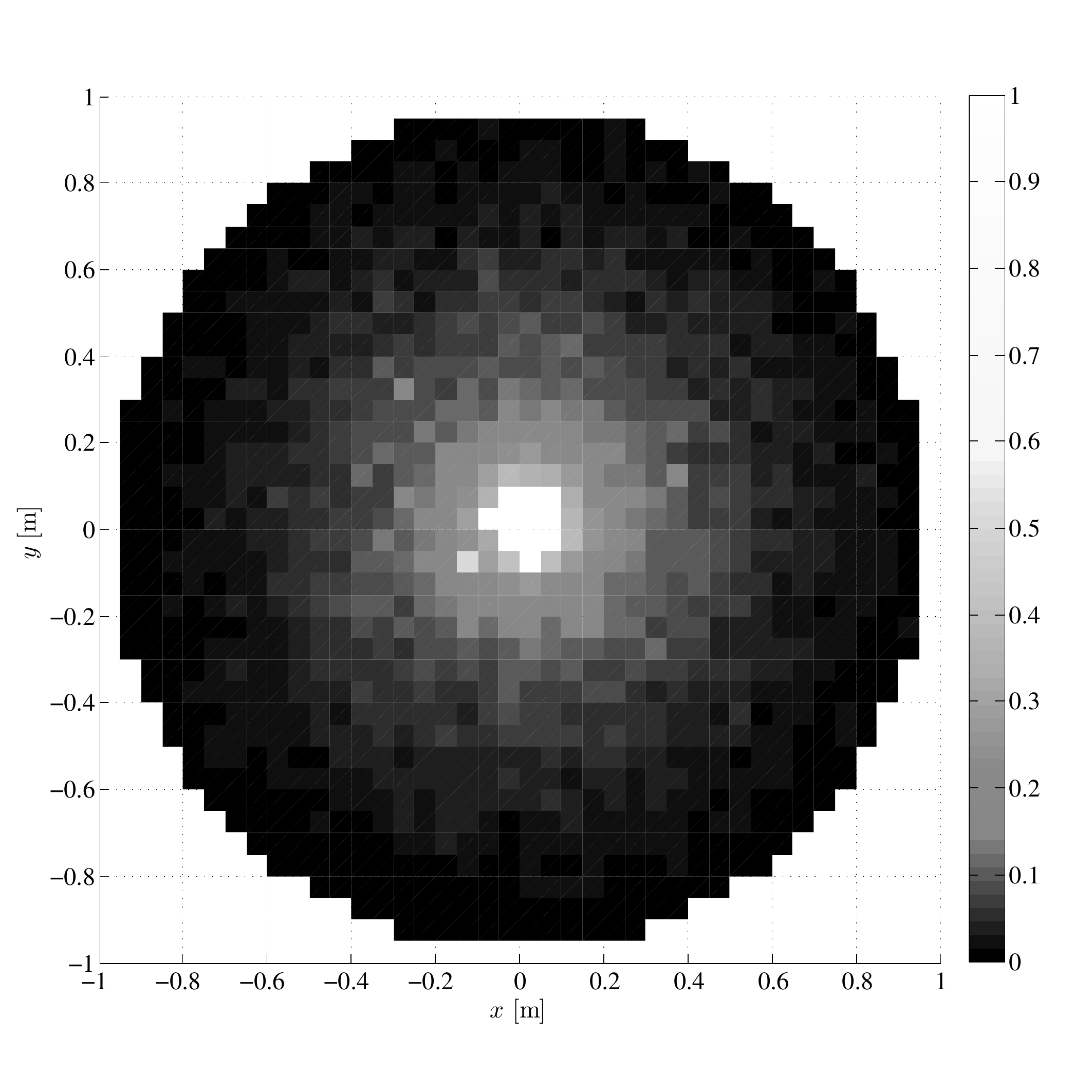}
	\label{fig:gx_theta0}
}
\subfigure[Solution $g(\mathrm{q})$ for robot orientation $\theta = \frac{\pi}{2}$]{
	\centering
	 \includegraphics[width=0.35\textwidth]{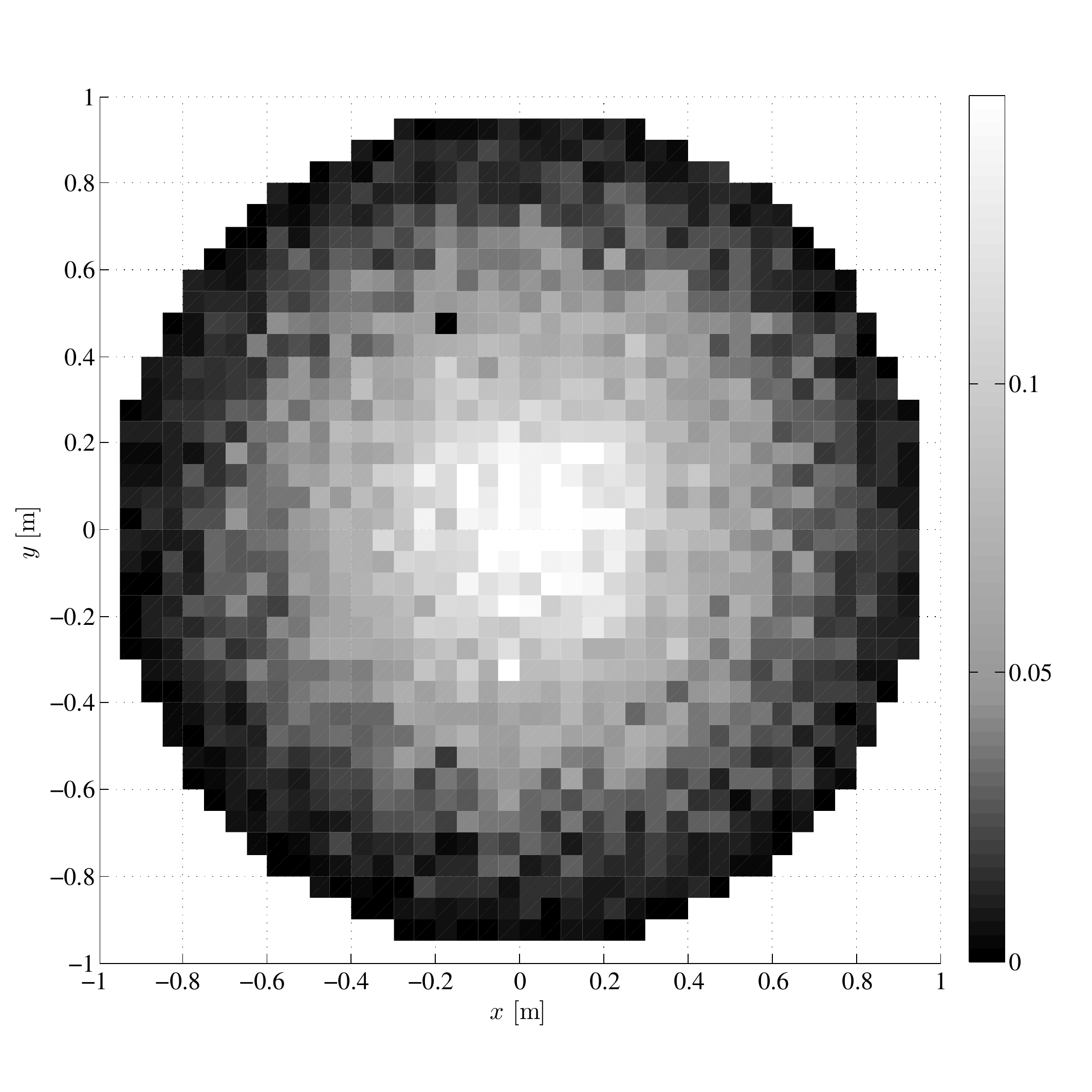}
	\label{fig:gx_theta90}
}
\caption[]{Numerical solution $g(\mathrm{q})$ of \ac{pde} \eqref{dirichlet-omnirobot}
for stochastic system \eqref{eq:omnirobot-examplesystem} for $R=1$ and $\varepsilon = 0.1$.}
\label{fig:gx}
\end{figure}

\begin{figure}[thpb]
      \centering
      \includegraphics[width=0.4\textwidth]{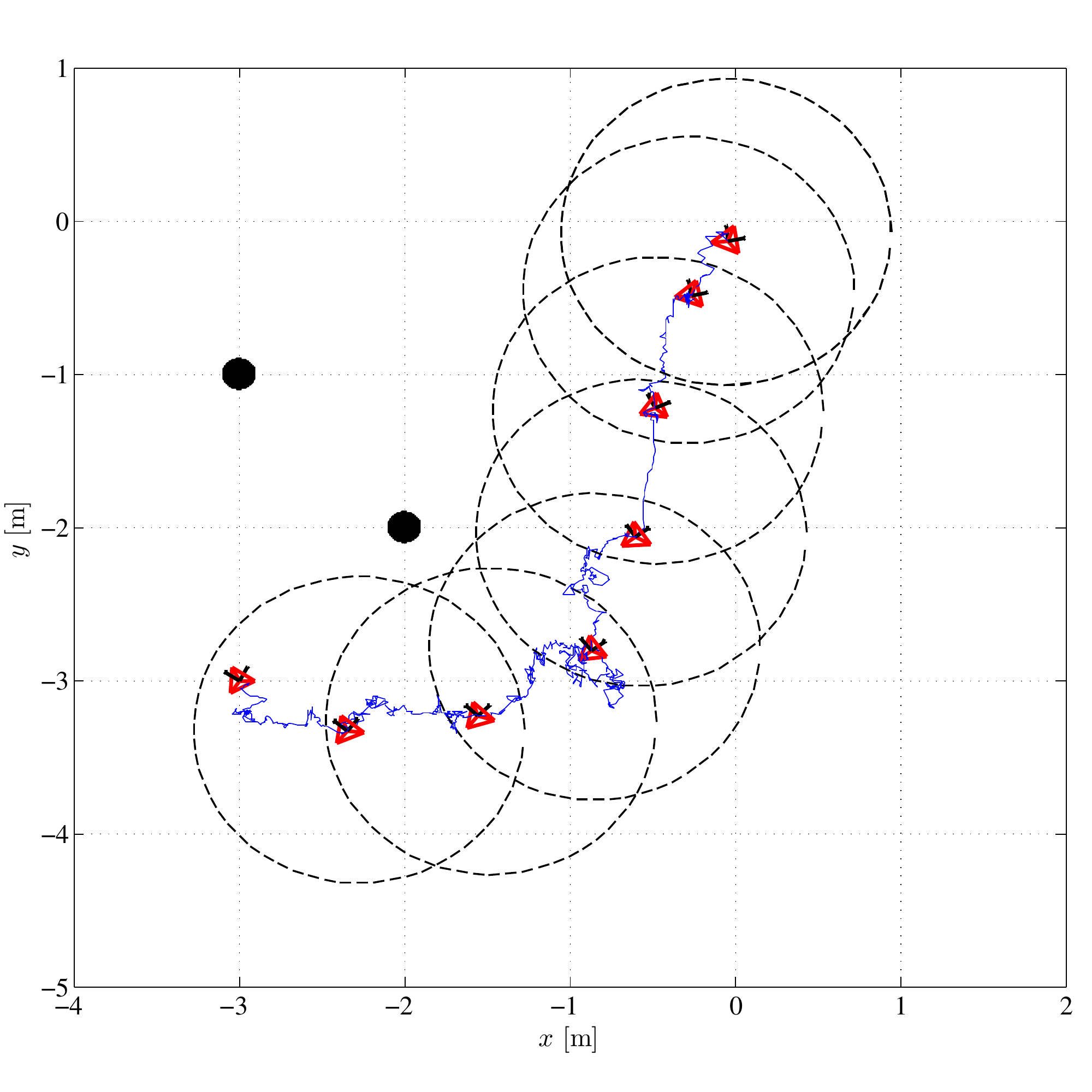}
      \caption{A sample path for initial condition $[x,y,\theta]^\intercal = [-3.0,-3.0,1.0]^\intercal$. Black circular dots represent two obstacles at $[-3,-1,\star]^\intercal$ and $[-2,-2,\star]^\intercal$. The robot position is shown by a red triangle and local coordinate axis at each switching point. Dotted circles represent the projection of boundary $\mathcal{M}_i$ on the $X$-$Y$ plane.}
      \label{fig:omni_path}
   \end{figure} 

\begin{figure}[thpb]
\centering
\subfigure[$x$ trajectory]{
	\centering
	 \includegraphics[width=0.35\textwidth]{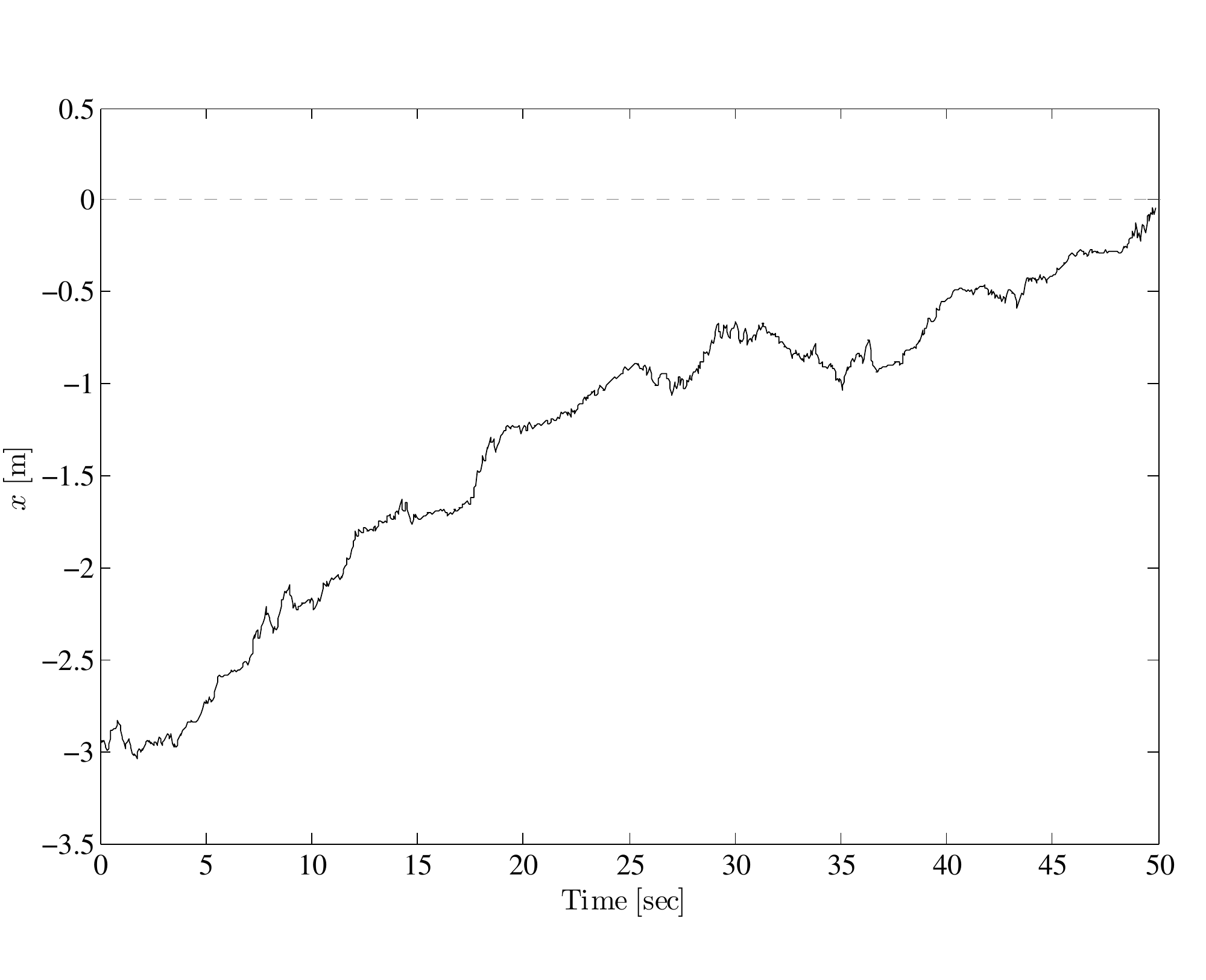}
	\label{fig:omni_x_traj}
}
\subfigure[$y$ trajectory]{
	\centering
	 \includegraphics[width=0.35\textwidth]{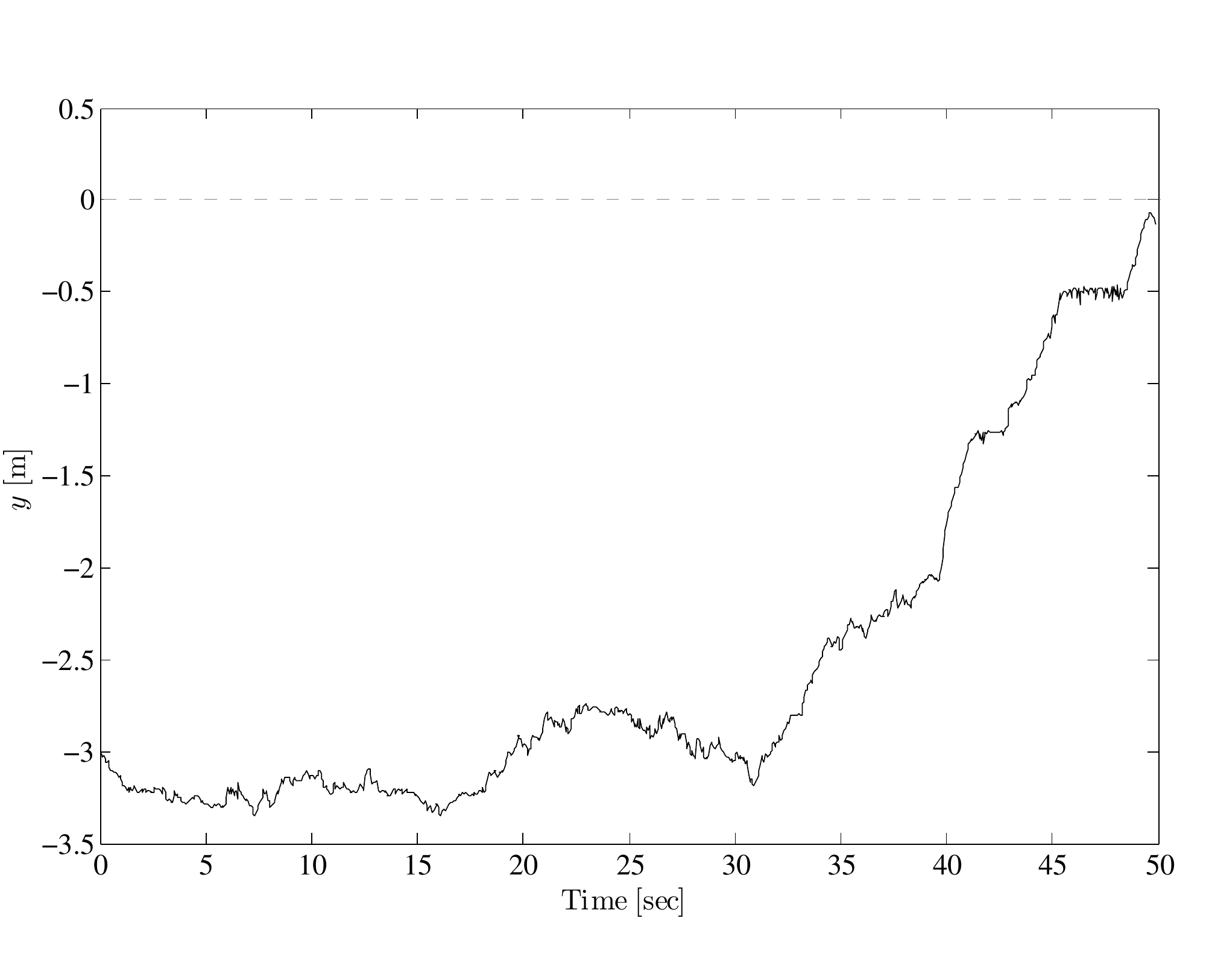}
	\label{fig:omni_y_traj}
}
\subfigure[$\theta$ trajectory]{
	\centering
	 \includegraphics[width=0.35\textwidth]{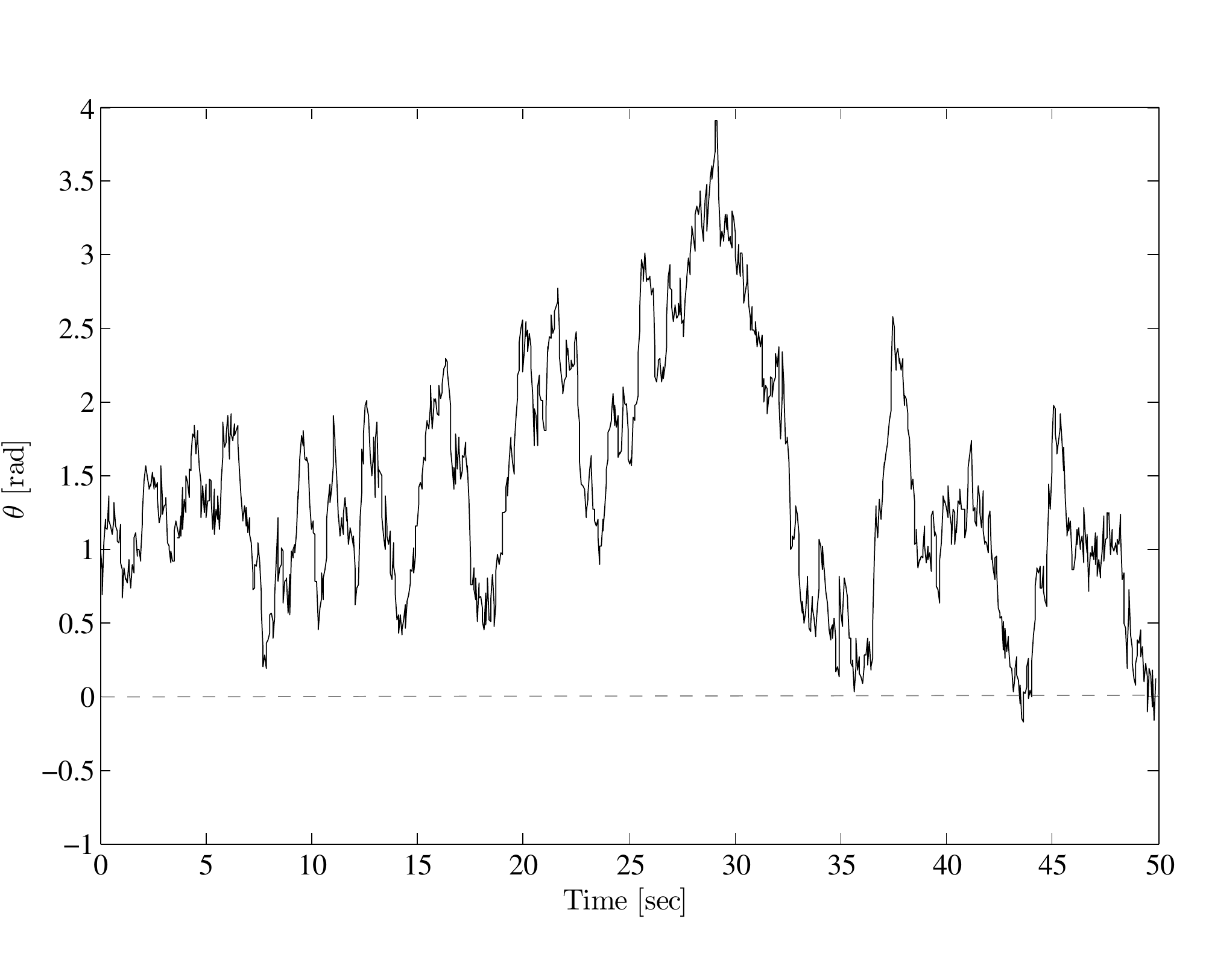}
	\label{fig:omni_theta_traj}
}
\subfigure[Norm of inputs]{
	\centering
	 \includegraphics[width=0.3\textwidth]{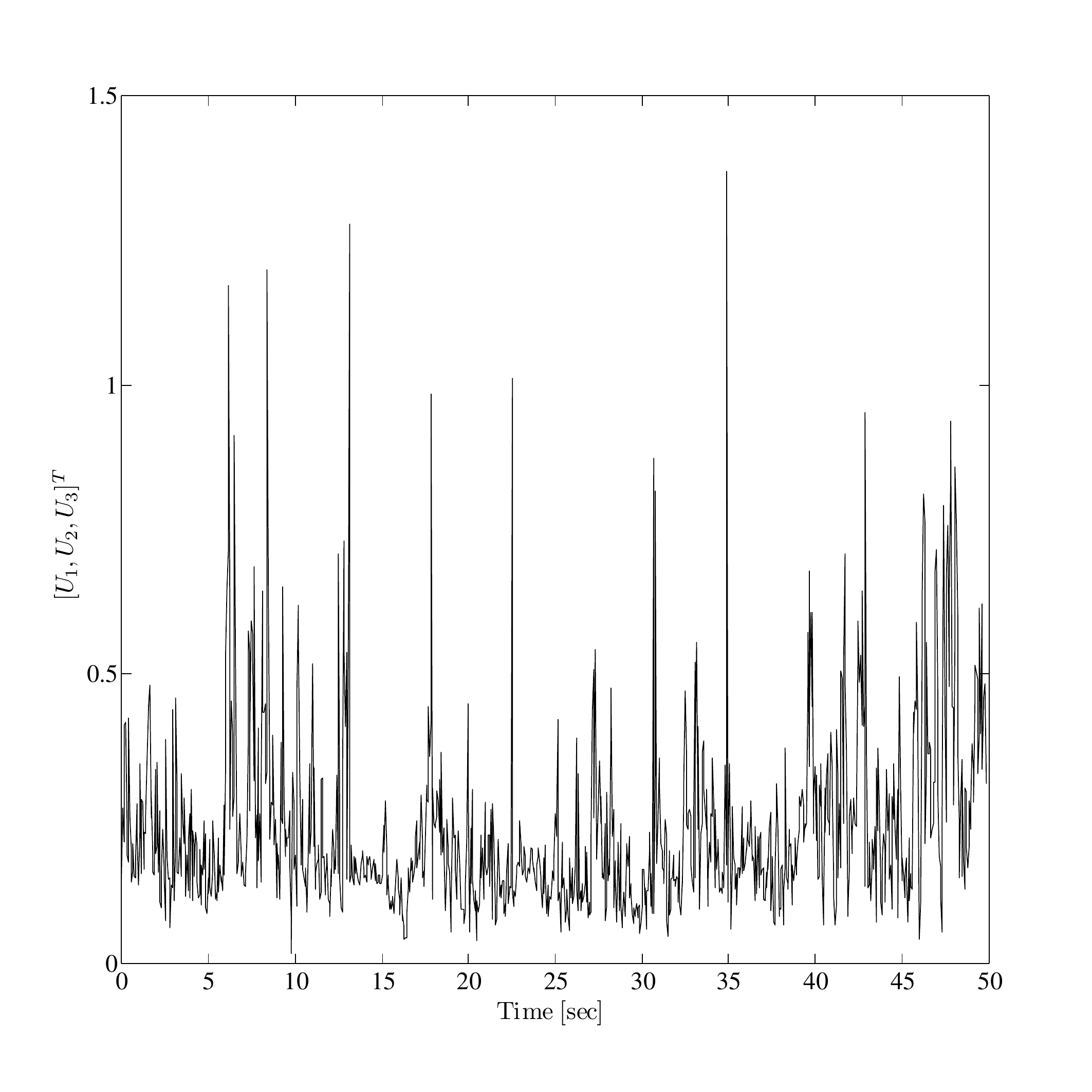}
	\label{fig:omni_inputs}
}
\caption[]{Individual trajectories and control input for the simulation presented in Fig. \ref{fig:omni_path}. \subref{fig:omni_x_traj}, \subref{fig:omni_y_traj}, \subref{fig:omni_theta_traj} Individual state trajectories converges to zero. \subref{fig:omni_inputs} The Euclidean norm of the control input applied for the entire trajectory.}
\label{fig:Unicycle-Results}
\end{figure}
%
\section{Conclusions}
\label{section:conclusions}
The proposed method allows the design of a receding horizon navigation controller for nonlinear systems governed by stochastic differential equations. 
If a feasible path, optimal or otherwise, is available in the form of a finite sequence of way-points, then 
an an optimal control law can be found to steer the stochastic system between these way-points, while keeping it close to the path and away from unsafe regions with probability one.  In cases where
control inputs are forced within upper and lower bounds, and state constraints (obstacles) are 
imposed, almost-sure convergence and safety is impossible, but it can be achieved with some
probability which depends on how severe the input bounds are compared with respect to the
magnitude of subjected noise.  For nonlinear systems with dynamics not permitting 
analytic solutions for the resulting \ac{pde}s, numerical solutions for dimensions up to $5$ or $6$ 
are shown to be well within the reach of currently available computing platforms.

\bibliographystyle{IEEEtran}
\bibliography{StocOptControl}

\end{document}